\documentclass[review]{elsarticle}
\usepackage{}
\usepackage[latin1]{inputenc} 
\usepackage{listings, tcolorbox}
\usepackage{lineno,hyperref}
\usepackage{amssymb}
\usepackage{color}
\usepackage{amsmath}
\usepackage{tikz}
\usepackage{amsfonts}
\usepackage{mathrsfs}
\usepackage{amsthm}

\usepackage{bbding}
\usepackage{mathrsfs}
\usepackage{amsmath, amsfonts, amssymb, mathrsfs, txfonts}
\usepackage{graphicx, subfigure}
\usepackage{wasysym}
\usepackage{color}
\usepackage{bm}
\usepackage{enumerate}

\usepackage{pifont}
\usepackage{txfonts}
\usepackage{amsmath}
\usepackage{graphicx}
\usepackage{amsfonts}
\usepackage{amssymb}
\usepackage{mathrsfs,psfrag,eepic,epsfig}
\usepackage{epstopdf}

\biboptions{numbers,sort&compress}
\newtheorem{thm}{Theorem}[section]

\newtheorem{lem}[thm]{Lemma}

\theoremstyle{definition}

\modulolinenumbers[5]

\journal{Journal of \LaTeX\ Templates}

\makeatletter \@addtoreset{equation}{section}

\bibliographystyle{elsarticle-num}

\begin{document}

\begin{frontmatter}

\title{Long-time asymptotics for the Wadati-Konno-Ichikawa equation with the Schwartz initial data}
\tnotetext[mytitlenote]{Project supported by the Fundamental Research Fund for the Central Universities under the grant No. 2019ZDPY07.\\
\hspace*{3ex}$^{*}$Corresponding author.\\
\hspace*{3ex}\emph{E-mail addresses}: xwu@cumt.edu.cn (X. Wu), sftian@cumt.edu.cn,
shoufu2006@126.com (S. F. Tian)}

\author{Xin Wu and Shou-Fu Tian$^{*}$}
\address{
School of Mathematics, China University of Mining and Technology, Xuzhou 221116, People's Republic of China
}

\begin{abstract}
In this work, we investigate the long-time asymptotic behavior of the Wadati-Konno-Ichikawa equation with initial data belonging to Schwartz space at infinity by using the nonlinear steepest descent method of Deift and Zhou for the oscillatory Riemann-Hilbert problem. Based on the initial value condition, the original Riemann-Hilbert problem is constructed to express the solution of the Wadati-Konno-Ichikawa equation. Through a series of deformations, the original RH problem is transformed into a model RH problem, from which the long-time asymptotic solution of the equation is obtained explicitly.
\end{abstract}

\begin{keyword}
The Wadati-Konno-Ichikawa equation \sep Zero boundary condition \sep
Riemann-Hilbert problem \sep Nonlinear steepest descent method \sep
Long-time asymptotics.
\end{keyword}

\end{frontmatter}


\tableofcontents

\section{Introduction}

As a well-known integrable equation, the nonlinear Schr\"{o}dinger (NLS) equation
\begin{align}
iq_{t}+\frac{1}{2}q_{xx}+|q|^{2}q=0,
\end{align}
has played an important role in the field of mathematical physics \cite{1}-\cite{4}, which is widely used in optics, free water waves of ideal fluids, plasma waves, etc. When pulses with a short propagation time along the fiber are considered, the NLS equation with high-order nonlinear dispersion terms is further studied, which indeed plays an important role in great success of optical communication technology.
However, to high-order nonlinear dispersion, it not only does not expand the classes of Kerr-type nonlinearity, but also does not change the prediction of the NLS equation for an unlimited short pulse duration as the input field strength or the number of soliton increases.
Because the refractive index change caused by the field is controlled by high-order nonlinearity, from which we know that it cannot be described by Kerr-type nonlinearity for short pulses and higher input peak pulse power.
As a result, the photoinduced refractive index change becomes saturated at higher field strengths. In order to study the propagation of soliton in materials with saturable nonlinearity, the NLS equation with saturable nonlinearity \cite{Herrmann-1991,Porsezian-2012} is studied, which reads
\begin{align}
iq_{t}+q_{xx}+\frac{|q|^{2}}{1+r|q|^{2}}q=0,
\end{align}
where $r$ is a constant. However, it is difficult to obtain an analytical solution to this equation because it is not integrable.

For integrable models with saturable nonlinearities, people can trace back to the Wadati-Konno-Ichikawa (WKI) equation \cite{Wadati-1979,Shimizu-1980}, which is written as
\begin{align}\label{Q1}
iq_{t}+\left(\frac{q}{\sqrt{1+|q|^{2}}}\right)_{xx}=0,
\end{align}
where $q$ is a complex function of $x$ and $t$. As we all know, the WKI equation has very high nonlinearity and peak solutions. Therefore, more and more attention is paid to the research of this equation. For the WKI equation, there is a gauge transformation that enables mutual conversion between the Ablowitz, Kaup, Newell and Segur (AKNS) system and the WKI equation in \cite{Ishimori-1982}. Resorting to the gauge transformation,  the B\"{a}cklund transformation was studied in \cite{Boiti-1986}-\cite{Xiao-1990}. The classic inverse scattering method with vanishing boundary condition and non-vanishing boundary condition for the WKI equation was studied in \cite{Shimizu-1980} and \cite{Choudhury-1982}, respectively, and one soliton solution was obtained.
Exact stationary solution and the orbital stability for stationary solution of the WKI equation were studied in \cite{Van1-2013,Van2-2013}. The existence of global solution for the WKI equation with small initial data was studied in \cite{Shimabukuro-2016} and the algebro-geometric construction of WKI flows were studied in \cite{Li-2016}. In \cite{Zhang-2017}, the Darboux transformation was used to investigate the Wadati-Konno-Ichikawa system and the breathe and rogue wave solution were  obtained. Based on the Riemann-Hilbert problem which is the modern version, the WKI equation with simple poles and higher-order poles was considered in \cite{Liu-2017} and \cite{Zhang-2019}, and the phenomenon of abundant soliton solution has also been described.

In this work, we consider the WKI equation with the following initial value condition
\begin{align}\label{1.2}
q(x,0)=q_{0}(x)\in S(\mathbb{R}), \quad -\infty<x<+\infty,
\end{align}
where $S(\mathbb{R})$ means Schwartz space.

As early as 1976, the exact expression of the long time asymptotic solution for the NLS equation with decaying initial value was given by Zakharov and Manakov in \cite{Zakharov-1976}.  In which the inverse scattering map and the reconstruction of the solution was investigated through an
oscillatory Riemann-Hilbert problem. In 1993, Deift and Zhou proposed a nonlinear steepest descent method to analyze the asymptoticity of the solution of oscillatory Riemann-Hilbert problem, which gave a rigorous proof process \cite{Deift-1993}. This method mainly use a series of deformations to reduce the original Riemann-Hilbert problem into a Riemann-Hilbert problem whose jump matrix is asymptotic fast (as $t\rightarrow\infty$) to the identity matrix everywhere except near some stationary phase points. Due to its effectiveness in the long-time asymptotic behavior of the solution for nonlinear integral equations, the nonlinear steepest descent method is widely used in many equations,
such as the defocusing nonlinear Schr\"{o}dinger equation equation \cite{deNLS}, the sine-Gordon equation \cite{SG}, the modified nonlinear Schr\"{o}dinger equation \cite{mNLS1,mNLS2}, the focusing nonlinear Schr\"{o}dinger equation \cite{fNLS}, the Korteweg-de Vries equation \cite{KdV}, derivative nonlinear Schr\"{o}dinger equation equation \cite{dNLS}, the Cammasa-Holm equation \cite{CH}, Fokas-Lenells equation \cite{FL}, the Spin-1 Gross-Pitaevskii Equation \cite{GP}, coupled Hirota equation \cite{cHirota}, short pulse equation \cite{SP,CSP}, Kundu-Eckhaus Equation \cite{KE1,KE2}, the coupled dispersive AB system \cite{AB}, extended modified Korteweg-de Vries equation \cite{emKdV}, the ``good" Boussinesq equation \cite{gB}, nonlocal mKdV equation \cite{nmKdV} etc.

Our purpose is to study the asymptotic solution of the WKI equation by using the Deift-Zhou nonlinear steepest descent method when the time $t$ approaches infinity. Compared with the previous work, our work has some notable and valuable points. On the one hand, this work needs to consider $\lambda\rightarrow0$ and $\lambda\rightarrow\infty$ in the spectrum analysis compared with the Lax pair of the equation which is of the AKNS type. Therefore, if we want to use the Riemann-Hilbert problem to solve the WKI equation, we need to establish a connection about them. On the other hand, we use a simpler calculation process compared with the work done by Deift and Zhou in 1993. The classic nonlinear steepest descent method uses a series of deformations. However, in our work, the jump contours are divided into several parts, and the leading term is only provided near the stationary point, while the remaining parts provide the error terms.

The structure of this work is given as follows.
In section 2, in the spectral analysis of the WKI equation, due to the particularity of the Lax pair, we divide the spectral analysis into two parts, namely $\lambda\rightarrow0$ and $\lambda\rightarrow\infty$, to discuss the properties of the spectral function separately. In section 3, we first construct the RH problem for the case of $\lambda\rightarrow\infty$, and establish the relationship between the spectral functions in the two cases of $\lambda\rightarrow0$ and $\lambda\rightarrow\infty$, so that the solution of the WKI equation can be expressed by the solution of the Riemann-Hilbert problem. The purpose of section 4 is to use the nonlinear steepest descent method to deform the original Riemann-Hilbert problem into a model Riemann-Hilbert problem. In which, we make a series of transformations and introduce a parabolic cylinder function. In section 5, the long time asymptotic solution of the WKI equation is accurately given. Finally, some conclusions and discussions are presented.

\section{Spectral analysis}

In this section, starting with the Lax pair of \eqref{Q1}, we analyze the related properties of the Jost functions $\Psi(x,t;\lambda)$ and the scattering data $S(\lambda)$, which are prepared for the next section.

The Wadati-Konno-Ichikawa equation admits the Lax pair
\begin{align}\label{2.1}
\begin{split}
&\Psi_{x}=U\Psi,\\
&\Psi_{t}=V\Psi,
\end{split}
\end{align}
where
\begin{gather*}
U=-i\lambda\sigma_{3}+\lambda Q,\\
V=\left(\begin{array}{cc}
    -\frac{2i\lambda^{2}}{\Phi} & \frac{2q}{\Phi}\lambda^{2}+i\lambda\left(\frac{q}{\Phi}\right)_{x} \\
    -\frac{2q^{*}}{\Phi}\lambda^{2}+i\lambda\left(\frac{q^{*}}{\Phi}\right)_{x} & \frac{2i\lambda^{2}}{\Phi}  \\
 \end{array}\right),\\
\Phi=\sqrt{1+|q|^{2}}, \quad
\sigma_{3}=\left(\begin{array}{cc}
    1 & 0 \\
    0 & -1  \\
 \end{array}\right), \quad
Q=\left(\begin{array}{cc}
    0 & q \\
    -q^{*} & 0  \\
 \end{array}\right),
\end{gather*}
the function $\Psi$ is a $2\times2$ matrix and $\lambda$ is the spectral parameter. According to the compatibility conditions, that is $U_{t}-V_{x}+[U,V]=0$, the Lax pair is equivalent to the Wadati-Konno-Ichikawa equation.

Since the Lax pair \eqref{2.1} of the WKI equation is not of AKNS type, spectral analysis is different from most situations. By analogy to the short pulse equation, we consider the case of spectral problem as $\lambda\rightarrow0$ and $\lambda\rightarrow\infty$.

\subsection{For $\lambda\rightarrow0$}

According to Fokas method, introducing a transformation as
\begin{align}\label{2.2}
\Psi(x,t;\lambda)=\psi(x,t;\lambda)e^{-i(\lambda x+2\lambda^{2}t)\sigma_{3}},
\end{align}
such that as $x\rightarrow\pm\infty$
\begin{align}
\psi(x,t;\lambda)\sim\mathbb{I}.
\end{align}
Correspondingly, the Lax pair \eqref{2.1} is equivalent to
\begin{align}\label{2.3}
\left\{\begin{aligned}
&\psi_{x}+i\lambda[\sigma_{3},\psi]=\overline{U}\psi,\\
&\psi_{t}+2i\lambda^{2}[\sigma_{3},\psi]=\overline{V}\psi,
\end{aligned}\right.
\end{align}
with
\begin{align*}
&\overline{U}=\lambda Q, \\
&\overline{V}=\left(\begin{array}{cc}
    2i\lambda^{2}\left(1-\frac{1}{\Phi}\right) & \frac{2q}{\Phi}\lambda^{2}+i\lambda\left(\frac{q}{\Phi}\right)_{x} \\
    -\frac{2q^{*}}{\Phi}\lambda^{2}+i\lambda\left(\frac{q^{*}}{\Phi}\right)_{x} & -2i\lambda^{2}\left(1-\frac{1}{\Phi}\right)  \\
 \end{array}\right).
\end{align*}
Normally, we transform the space-part of \eqref{2.1} into the linear Volterra integral equations for $\psi_{\pm}(x,t;\lambda)$ as follows
\begin{align}\label{2.4}
\left\{\begin{aligned}
&\psi_{+}(x,t;\lambda)=\mathbb{I}-\int^{+\infty}_{x}e^{-i\lambda(x-y)\hat{\sigma_{3}}}\overline{U}(y,t)\psi_{+}(y,t;\lambda)dy,\\
&\psi_{-}(x,t;\lambda)=\mathbb{I}+\int^{x}_{-\infty}e^{-i\lambda(x-y)\hat{\sigma_{3}}}\overline{U}(y,t)\psi_{-}(y,t;\lambda)dy,
\end{aligned}\right.
\end{align}
where $e^{\hat{\sigma}}A=e^{\sigma}Ae^{-\sigma}$.

Denote $\psi_{\pm}=[\psi_{\pm,1}, \psi_{\pm,2}]$, it can directly know that
the exponential factor $e^{-i\lambda(x-y)}$ is only contained in the the integral equation for the first column of $\psi_{-}$ which decays when $\lambda$ is in the upper half plane, i.e. $\lambda\in\{\lambda|Im\lambda>0\}$, and the exponential factor $e^{-i\lambda(x-y)}$ is only contained in the integral equation for the second column of $\psi_{+}$ which decays when $\lambda$ is in the upper half plane, i.e. $\lambda\in\{\lambda|Im\lambda>0\}$. Thus $\psi_{-,1}$ and $\psi_{+,2}$ allow analytic extensions into $\{\lambda|Im\lambda>0\}$. Similar to the above method, it can be seen that $\psi_{-,2}$ and $\psi_{+,1}$ can be analytically continued to lower half plane, i.e. $\{\lambda|Im\lambda<0\}$.

The asymptotic properties for the Jost solutions $\psi_{\pm}$ are derived as
\begin{align}
\psi_{\pm}(x,t;\lambda)=I+\int^{x}_{\pm\infty}\overline{U}(y,t;\lambda)dy+O(\lambda^{2}), \quad
as \quad \lambda\rightarrow0.
\end{align}

To begin with the asymptotic property of the Jost solutions $\psi_{\pm}$, the solution of WKI equation is presented as
\begin{align}
Q(x,t)=\lim_{\lambda\rightarrow0}\frac{1}{\lambda}\frac{\partial}{\partial x}\psi_{\pm}(x,t;\lambda).
\end{align}

\subsection{For $\lambda\rightarrow\infty$}

It is worth noting that for the two cases of $\lambda\rightarrow0$ and $\lambda\rightarrow\infty$, the asymptotic properties of the Jost solution are not universal when constructing the RH problem. Therefore, when $\lambda\rightarrow\infty$, we take a different approach than when constructing the RH problem described above. In what follows, the transformation is need to be introduced to control the eigenfunctions.

To begin with, we introduce P(x,t) which is a $2\times2$ matrix-value function as
\begin{align}
P(x,t)=\sqrt{\frac{1+\Phi}{2\Phi}}\left(\begin{array}{cc}
                                    1 & \frac{-i(1-\Phi)}{q^{*}} \\
                                    \frac{-i(1-\Phi)}{q} & 1
                                  \end{array}\right),
\end{align}
then its inverse matrix is
\begin{align}
P^{-1}(x,t)=\sqrt{\frac{1+\Phi}{2\Phi}}\left(\begin{array}{cc}
                                    1 & \frac{i(1-\Phi)}{q^{*}} \\
                                    \frac{i(1-\Phi)}{q} & 1
                                  \end{array}\right).
\end{align}

Defining a transformation as
\begin{align}
\tilde{\Psi}(x,t;\lambda)=P(x,t)\Psi(x,t;\lambda),
\end{align}
therefore the Lax pair \eqref{2.1} is transformed into
\begin{align}\label{2.11}
\left\{\begin{aligned}
&\tilde{\Psi}_{x}+\theta_{x}\tilde{\Psi}=\tilde{U}\tilde{\Psi},\\
&\tilde{\Psi}_{t}+\theta_{t}\tilde{\Psi}=\tilde{V}\tilde{\Psi},
\end{aligned}\right.
\end{align}
here
\begin{align}\label{2.12}
&\theta_{x}=i\lambda\Phi\sigma_{3},\notag\\
&\theta_{t}=2i\lambda^{2}\sigma_{3}+\lambda\frac{qq^{*}_{x}-q_{x}q^{*}}{2\Phi^{2}}\sigma_{3},
\end{align}
with
\begin{gather*}
\tilde{U}=-P^{-1}P_{x}
=\left(\begin{array}{cc}
 -\frac{qq^{*}_{x}-q_{x}q^{*}}{4\Phi(1+\Phi)} & -\frac{iq[\Phi(qq^{*}_{x}-q_{x}q^{*})-|q|^{2}_{x}]}{4\Phi^{2}(\Phi^{2}-1)} \\
 \frac{iq^{*}[\Phi(qq^{*}_{x}-q_{x}q^{*})+|q|^{2}_{x}]}{4\Phi^{2}(\Phi^{2}-1)} & -\frac{qq^{*}_{x}-q_{x}q^{*}}{4\Phi(1+\Phi)}
                                  \end{array}\right),\\
\tilde{V}=\left(\begin{array}{cc}
                                    \tilde{V}_{11} & \tilde{V}_{12} \\
                                    \tilde{V}_{21} & -\tilde{V}_{11}
                                  \end{array}\right),\\
\tilde{V}_{11}=-\frac{qq^{*}_{t}-q_{t}q^{*}}{4\Phi(1+\Phi)},\\
\tilde{V}_{12}=-\frac{iq[q^{*}(qq^{*}_{x}-q^{*}q_{x})-2|q|^{*}_{x}(\Phi-1)]}{2(\Phi-1)\Phi^{3}q^{*}}\lambda
-\frac{iq[q^{*}(qq^{*}_{t}-q_{t}q^{*})-2|q|^{*}_{t}(\Phi-1)]}{4\Phi^{2}(\Phi^{2}-1)q^{*}},\\
\tilde{V}_{21}=\frac{iq^{*}[q(qq^{*}_{x}-q^{*}q_{x})+2|q|_{x}(\Phi-1)]}{2(\Phi-1)\Phi^{3}q}\lambda
+\frac{iq^{*}[q(qq^{*}_{t}-q_{t}q^{*})+2|q|_{t}(\Phi-1)]}{4\Phi^{2}(\Phi^{2}-1)q}.
\end{gather*}
It is easy to see that \eqref{2.12} is compatible because of $\theta_{xt}=\theta_{tx}$ leading to
\begin{align}
i\Phi_{t}=\left(\frac{qq^{*}_{x}-q^{*}q_{x}}{2\Phi^{2}}\right)_{x},
\end{align}
which is the conservation law of the WKI equation. For this reason, the function $\theta(x,t;\lambda)$ is defined as
\begin{align}
\theta(x,t;\lambda)=(i\lambda\tilde{x}+2i\lambda^{2}t)\sigma_{3},
\end{align}
with
\begin{align}
\tilde{x}(x,t)=x-\int^{+\infty}_{x}(\Phi(y)-1)dy.
\end{align}

Defining
\begin{align}
\phi(x,t;\lambda)=e^{d_{+}\hat{\sigma}_{3}}\tilde{\Psi}e^{-d_{-}\sigma_{3}}e^{\theta},
\end{align}
where
\begin{gather*}
d_{+}=\int^{x}_{-\infty}\frac{qq^{*}_{x}-q^{*}q_{x}}{4\Phi(1+\Phi)}(\xi,t)d\xi,\\
d_{-}=\int^{+\infty}_{x}\frac{qq^{*}_{x}-q^{*}q_{x}}{4\Phi(1+\Phi)}(\xi,t)d\xi,\\
d=d_{+}+d_{-}=\int^{+\infty}_{-\infty}\frac{qq^{*}_{x}-q^{*}q_{x}}{4\Phi(1+\Phi)}(\xi,t)d\xi.
\end{gather*}
Then
\begin{align}
\left\{\begin{aligned}
&\phi_{x}+[\theta_{x},\phi]=\tilde{U}\phi,\\
&\phi_{t}+[\theta_{t},\phi]=\tilde{V}\phi,
\end{aligned}\right.
\end{align}
By using the same way, we have
\begin{align}\label{2.18}
\phi_{\pm}(x,t;\lambda)=I+\int^{x}_{\pm\infty}e^{\theta(y,t;\lambda)-\theta(x,t;\lambda)}
\tilde{U}(y,t;\lambda)\phi_{\pm}(y,t;\lambda)e^{\theta(x,t;\lambda)-\theta(y,t;\lambda)}dy.
\end{align}
If the definition of $\theta$ is considered, the above Volterra integral equations \eqref{2.18} are written as
\begin{align}\label{2.19}
\phi_{\pm}(x,t;\lambda)=I+\int^{x}_{\pm\infty}e^{-i\lambda\int^{x}_{y}\Phi(\xi)d\xi\sigma_{3}}
\tilde{U}(y,t;\lambda)\phi_{\pm}(y,t;\lambda)e^{i\lambda\int^{x}_{y}\Phi(\xi)d\xi\sigma_{3}}dy.
\end{align}

On the basis of \eqref{2.19} and the above analysis, the Jost functions $\phi_{\pm}(x,t;\lambda)$ hold the following analytic properties as:
$\phi_{+,1}$ and $\phi_{-,2}$ (and, respectively, $\phi_{-,1}$ and $\phi_{+,2}$) are analytic in $\{\lambda|Im\lambda<0\}$ (and, respectively, $\{\lambda|Im\lambda>0\}$) and continuous in $\{\lambda|Im\lambda\leq0\}$ (and, respectively, $\{\lambda|Im\lambda\geq0\}$).

Meanwhile, the Jost functions $\phi_{\pm}(x,t;\lambda)$ satisfy the symmetry relationship as
\begin{align}
\phi^{*}_{\pm}(x,t;\lambda^{*})=\left(\begin{array}{cc}
                                       0 & 1 \\
                                       -1 & 0
                                     \end{array}\right)
\phi_{\pm}(x,t;\lambda)\left(\begin{array}{cc}
                                       0 & -1 \\
                                       1 & 0
                                     \end{array}\right),
\end{align}
and the asymptotic properties as:\\
\begin{itemize}
\item $\left(\phi_{+,1}, \phi_{-,2}\right)\rightarrow I$ as $\lambda\rightarrow\infty$ in $\{\lambda|Im\lambda\leq0\}$;\\
\item $\left(\phi_{-,1}, \phi_{+,2}\right)\rightarrow I$ as $\lambda\rightarrow\infty$ in $\{\lambda|Im\lambda\geq0\}$.
\end{itemize}
Since $\tilde{\Psi}_{\pm}$ are the solution of \eqref{2.11}, there is a linear relationship between them, then there is a matrix $S(\lambda)$ which is not depend on $x$ and $t$ such that
\begin{align}\label{2.21}
\phi_{+}(x,t;\lambda)=\phi_{-}(x,t;\lambda)e^{-(i\lambda\tilde{x}+2i\lambda^{2}t)\sigma_{3}}S(\lambda)
e^{(i\lambda\tilde{x}+2i\lambda^{2}t)\sigma_{3}},
\end{align}
where $S(\lambda)=(s_{ij})_{2\times2}$. According to the symmetry property of $\phi_{\pm}$, $S(\lambda)$ can be rewritten as
\begin{align*}
S(\lambda)=\left(\begin{array}{cc}
                         a^{*}(\lambda) & b(\lambda) \\
                         -b^{*}(\lambda) & a(\lambda)
                       \end{array}\right), \quad \lambda\in\mathbb{R},
\end{align*}
where
\begin{align}
&a(\lambda)=\det(\phi_{-,1},\phi_{+,2}),\notag\\
&b(\lambda)=e^{2i(\lambda\tilde{x}+2i\lambda^{2}t)}\det(\phi_{+,2},\phi_{-,2}).
\end{align}

Furthermore, for all $\lambda\in\mathbb{R}$, one has
\begin{align}
\det S(\lambda)=1.
\end{align}

Based on the properties of $\phi_{\pm}$, we know that scattering data, that is $a(\lambda)$ and $b(\lambda)$, admit the asymptotic properties as:
\begin{itemize}
\item $a(\lambda)\rightarrow1$ as $\lambda\rightarrow\infty$;
\item $b(\lambda)\rightarrow0$ as $\lambda\rightarrow\infty$;
\end{itemize}
and the analytic properties as:
\begin{itemize}
\item $a(\lambda)$ is analytic in $\{\lambda|Im\lambda<0\}$ and continuous in $\{\lambda|Im\lambda\leq0\}$;
\item $b(\lambda)$ is are nowhere analytic and continuous for $\lambda\in\mathbb{R}$.
\end{itemize}
\section{The Riemann-Hilbert problem}

In accordance with the properties (the analytic properties and the asymptotic properties) of the Jost functions $\phi_{\pm}(x,t;\lambda)$ and scattering data $S(\lambda)$, we define a piecewise meromorphic $2\times2$ matrix function as
\begin{align}
M(x,t;\lambda)=\left\{\begin{aligned}
&M_{+}(x,t;\lambda)=\left(\phi_{-,1}(x,t;\lambda), \frac{\phi_{+,2}(x,t;\lambda)}{a(\lambda)}\right),\quad Im\lambda>0,\\
&M_{-}(x,t;\lambda)=\left(\frac{\psi_{+,1}(x,t;\lambda)}{a^{*}(\lambda^{*})}, \psi_{-,2}(x,t;\lambda)\right),\quad Im\lambda<0,
\end{aligned}\right.
\end{align}
with $M_{\pm}(x,t;\lambda)=\mathop{\lim}\limits_{\varepsilon\rightarrow0^{+}}M(x,t;\lambda\pm i\varepsilon)$ for $\varepsilon, \lambda\in\mathbb{R}$, as well as the scattering coefficient as
\begin{align}
r(k)=\frac{b(\lambda)}{a(\lambda)}.
\end{align}

From the preparation of the previous section, we have the following theorem.
\begin{thm}
The piecewise meromorphic function $M(x,t;\lambda)$ satisfies the following Riemann-Hilbert problem as:
\begin{itemize}
\item $M(x,t;\lambda)$ is analytic in $\mathbb{C}\backslash\mathbb{R}$.
\item The jump condition is
\begin{align}
M_{+}(x,t;\lambda)=M_{-}(x,t;\lambda)e^{-(i\lambda\tilde{x}+2i\lambda^{2}t)\hat{\sigma}_{3}}J(\lambda),
\end{align}
with
\begin{align*}
J(\lambda)=\left(\begin{array}{cc}
                 1 & r(\lambda) \\
                 r^{*}(\lambda) & 1+|r(\lambda)|^{2}
               \end{array}\right).
\end{align*}
\item $M(x,t;\lambda)\rightarrow I$, \quad as $\lambda\rightarrow\infty$.
\end{itemize}
\end{thm}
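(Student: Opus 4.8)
The plan is to verify each of the three bullet points of the Riemann--Hilbert problem directly from the definition of $M(x,t;\lambda)$ and the analytic/asymptotic/symmetry properties of $\phi_\pm$ and $S(\lambda)$ collected in Section 2. For the analyticity bullet, I would argue column by column: $\phi_{-,1}$ is analytic in $\{\mathrm{Im}\,\lambda>0\}$ by the Volterra estimates on \eqref{2.19}, and $\phi_{+,2}/a(\lambda)$ is meromorphic there because $\phi_{+,2}$ is analytic in the upper half plane and $a(\lambda)=\det(\phi_{-,1},\phi_{+,2})$ extends analytically to that region; since we work with Schwartz data under the (implicit) assumption of no spectral singularities, $a(\lambda)$ has no zeros, so $M_+$ is in fact analytic in the open upper half plane, and symmetrically $M_-$ is analytic in the open lower half plane. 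Continuity up to $\mathbb{R}$ follows from the continuity statements for $\phi_\pm$ and $a$.

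For the jump condition, the key identity is the scattering relation \eqref{2.21}, $\phi_+ = \phi_- e^{-(i\lambda\tilde x+2i\lambda^2 t)\sigma_3} S(\lambda) e^{(i\lambda\tilde x+2i\lambda^2 t)\sigma_3}$, read on $\mathbb{R}$ as a relation between boundary values. I would write $\theta = (i\lambda\tilde x+2i\lambda^2 t)\sigma_3$ for brevity, expand the columns of this matrix equation to get $\phi_{+,1} = a^*(\lambda)\phi_{-,1} - b^*(\lambda)e^{2\theta_{11}}\phi_{-,2}$ and $\phi_{+,2} = b(\lambda)e^{-2\theta_{11}}\phi_{-,1} + a(\lambda)\phi_{-,2}$, then divide the relevant columns by $a$ or $a^*$ and substitute $r=b/a$, $r^*=b^*/a^*$, together with $\det S=1$ i.e. $|a|^2+|b|^2=1$ on $\mathbb{R}$, to assemble $M_+ = M_- e^{-\theta\hat\sigma_3}J(\lambda)$ with $J$ exactly the stated matrix. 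One has to be a little careful that the $a^*(\lambda^*)$ in the definition of $M_-$ restricts to $a^*(\lambda)$ on the real axis and that the conjugation factor $e^{-\theta\hat\sigma_3}$ correctly distributes the oscillatory exponentials onto the off-diagonal entries; this bookkeeping is the main place an error could creep in.

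For the normalization bullet, I would combine the asymptotic properties $(\phi_{+,1},\phi_{-,2})\to I$ and $(\phi_{-,1},\phi_{+,2})\to I$ as $\lambda\to\infty$ with $a(\lambda)\to1$: in the upper half plane $M_+ = (\phi_{-,1},\phi_{+,2}/a)\to(e_1,e_2)=I$, and similarly $M_-\to I$ in the lower half plane.

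The main obstacle is not any single deep step but rather the consistency of conventions: the excerpt's definition of $M_-$ uses $\psi_{+,1},\psi_{-,2}$ (apparently a typo for $\phi_{+,1},\phi_{-,2}$) and the factor $a^*(\lambda^*)$, and the jump is stated with the conjugation $e^{-\theta\hat\sigma_3}$ acting on $J$ rather than the bare $S$; I would need to track these through the column expansion carefully so that the oscillatory phase lands on the $(1,2)$ and $(2,1)$ entries as $e^{-2\theta_{11}}$ and $e^{2\theta_{11}}$ and the diagonal entries of $J$ stay phase-free. Once that bookkeeping is pinned down, the proof is a direct substitution using only \eqref{2.21}, $\det S=1$, the analyticity of $a$ in the lower half plane with no zeros, and the stated limits at $\lambda=\infty$.
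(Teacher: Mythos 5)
Your proposal is correct and follows the same (essentially omitted) verification the paper intends: analyticity column by column from the Volterra representation, the jump by expanding the columns of the scattering relation \eqref{2.21} and substituting $r=b/a$, and the normalization from $\phi_{-,1},\phi_{+,2}/a\to(e_1,e_2)$ with $a\to1$; you also correctly flag the paper's typos ($\psi$ vs.\ $\phi$ in $M_-$, and the half-plane in which $a=\det(\phi_{-,1},\phi_{+,2})$ is actually analytic) and the implicit no-zeros assumption on $a$. The only blemish is that your closing sentence reverts to ``analyticity of $a$ in the lower half plane,'' echoing the paper's misprint rather than your own correct earlier statement that $a$ must be analytic and nonvanishing in the \emph{upper} half plane for $M_+$ to be analytic there.
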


Considering the Jost functions $\phi_{\pm}(x,t;\lambda)$, we have
\begin{align}
\det M(x,t;\lambda)=1.
\end{align}
Simultaneously, the symmetry property exists the relationship as
\begin{align}
M^{*}(x,t;\lambda^{*})=\left(\begin{array}{cc}
                                       0 & 1 \\
                                       -1 & 0
                                     \end{array}\right)
M(x,t;\lambda)\left(\begin{array}{cc}
                                       0 & -1 \\
                                       1 & 0
                                     \end{array}\right).
\end{align}

For the jump matrix in the jump condition, it is easy to see that it is not only related to the scattering data, i.e. $a(\lambda)$ and $b(\lambda)$, which is obtained because of the initial data $q(x,0)$, but also related to the unknown $\tilde{x}$. Therefore, we need to convert the plane under consideration from $(x,t)$ to $(\tilde{x},t)$.

With regard to $\tilde{M}(\tilde{x},t;\lambda)=M(\tilde{x}(x,t),t;\lambda)$, one has
\begin{align}
\tilde{M}_{+}(\tilde{x},t;\lambda)=\tilde{M}_{-}(\tilde{x},t;\lambda)\tilde{J}(\tilde{x},t;\lambda),
\quad \lambda\in\mathbb{R},
\end{align}
with $\tilde{J}(\tilde{x},t;\lambda)=e^{-(i\lambda\tilde{x}+2i\lambda^{2}t)\hat{\sigma}_{3}}J(\lambda)$.

Besides, the function $\tilde{M}(\tilde{x},t;\lambda)$ admits the normalization condition
\begin{align}
\tilde{M}(\tilde{x},t;\lambda)\rightarrow I,\quad as~ \lambda\rightarrow\infty,
\end{align}
which guarantees the uniqueness of the Riemann-Hilbert solution.

Because we use $\phi_{\pm}(x,t;\lambda)$ to introduce $M(x,t;\lambda)$, which is the solution of RH problem, however when constructing the solution of the equation, it is provided by $\psi_{\pm}(x,t;\lambda)$, so a connection between $\phi_{\pm}(x,t;\lambda)$ and $\psi_{\pm}(x,t;\lambda)$ is established.

It is noted that both $\phi_{\pm}(x,t;\lambda)$ and $\psi_{\pm}(x,t;\lambda)$ are solution of the Lax pair \eqref{2.1}, there exists the linear relationship as
\begin{align}
\phi_{\pm}(x,t;\lambda)=e^{-d_{+}\sigma_{3}}P^{-1}(x,t)\psi_{\pm}(x,t;\lambda)e^{-(i\lambda x+2i\lambda^{2}t)\sigma_{3}}C_{\pm}(\lambda)e^{\theta}e^{d\sigma_{3}},
\end{align}
where $\theta=(i\lambda\tilde{x}+2i\lambda^{2}t)\sigma_{3}$ and $C_{\pm}(\lambda)$ are only depend on $\lambda$. Through direct calculation, it is known that when $x\rightarrow\pm$, there exists
\begin{align}
C_{+}(\lambda)=e^{-d\sigma_{3}}e^{i\lambda c\sigma_{3}},\quad C_{-}(\lambda)=I,
\end{align}
where $c=\int^{+\infty}_{-\infty}(\Phi(\xi)-1)d\xi$ is a quantity conserved under the dynamics governed by WKI equation, namely, $c$ is a constant.

Then
\begin{align}
&\phi_{+}(x,t;\lambda)=e^{-d_{+}\sigma_{3}}P^{-1}(x,t)\psi_{+}(x,t;\lambda)
e^{i\lambda\int^{x}_{+\infty}(\Phi(y)-1)dy\sigma_{3}},\\
&\phi_{-}(x,t;\lambda)=e^{-d_{+}\sigma_{3}}P^{-1}(x,t)\psi_{-}(x,t;\lambda)
e^{i\lambda\int^{x}_{-\infty}(\Phi(y)-1)dy\sigma_{3}}e^{d\sigma_{3}}.
\end{align}

In the light of the asymptotic properties to the Jost functions $\psi_{\pm}(x,t;\lambda)$ and the scattering data, when $\lambda\rightarrow0$ we have
\begin{align*}
M_{+}(x,t;\lambda)=&e^{-d_{+}\sigma_{3}}P^{-1}(x,t)\left[
I+\int^{x}_{+\infty}\overline{U}(y,t;\lambda)dy+i\lambda\int^{x}_{+\infty}(\Phi(y)-1)dy\sigma_{3}\right]\\
&\times e^{i\lambda\int^{x}_{+\infty}(\Phi(y)-1)dy\sigma_{3}}+O(\lambda^{2}),\\
M_{-}(x,t;\lambda)=&e^{-d_{+}\sigma_{3}}P^{-1}(x,t)\left[
I+\int^{x}_{-\infty}\overline{U}(y,t;\lambda)dy+i\lambda\int^{x}_{-\infty}(\Phi(y)-1)dy\sigma_{3}\right]\\
&\times e^{i\lambda\int^{x}_{-\infty}(\Phi(y)-1)dy\sigma_{3}}e^{d\sigma_{3}}+O(\lambda^{2}),
\end{align*}
and
\begin{align*}
&a(\lambda)=e^{d}\left(1+i\lambda c+O(\lambda^{2})\right),\\
&b(\lambda)=O(\lambda).
\end{align*}

In the above, we know that the RH problem about $M(x,t;\lambda)$ is transformed into the RH problem about $\tilde{M}(\tilde{x},t;\lambda)$, and the solution $q(x,t)$ of the WKI equation with the initial value problem can be given by $\tilde{M}(\tilde{x},t;\lambda)$, that is
\begin{align}
q(x,t)=\tilde{q}(\tilde{x}(x,t),t),
\end{align}
where
\begin{align}
&\tilde{x}(x,t)=\tilde{x}+\lim_{\lambda\rightarrow0}\frac{\left(\tilde{M}(\tilde{x},t;0)\right)^{-1}
\left(\tilde{M}(\tilde{x},t;\lambda)\right)_{11}-1}{i\lambda},\\
&e^{-2d}q(\tilde{x},t)=\lim_{\lambda\rightarrow0}\frac{\partial}{\partial\tilde{x}}
\frac{\left(\left(\tilde{M}(\tilde{x},t;0)\right)^{-1}\tilde{M}(\tilde{x},t;\lambda)\right)_{12}}{i\lambda}.
\end{align}

\section{Reduction to a model Riemann-Hilbert problem}

In this section, we convert the original RH problem into the solved RH problem in which the jump matrix is asymptotic to an identity matrix when $t\to\infty$ except near the stationary points and introduce a solvable parabolic cylinder model, so as to obtain a model Riemann-Hilbert problem which long time asymptotic solution is solved.

Note that the jump matrix $\tilde{J}(\tilde{x},t;\lambda)$ has two oscillation terms $e^{-(i\lambda\tilde{x}+2i\lambda^{2}t)\sigma_{3}}$ and $e^{(i\lambda\tilde{x}+2i\lambda^{2}t)\sigma_{3}}$. Let
\begin{align}
\varphi(\lambda)=i\lambda\frac{\tilde{x}}{t}+2i\lambda^{2}
\triangleq i\lambda\zeta+2i\lambda^{2},
\end{align}
we discuss the stationary point of $\varphi(\lambda)$, that is, taking
\begin{align}
\frac{d\varphi(\lambda)}{d\lambda}=0,
\end{align}
thus the stationary point is solved as $\lambda_{0}=-\frac{4\tilde{x}}{t}$. Then $\varphi(\lambda)$ is rewritten as
\begin{align}\label{5.3}
\varphi(\lambda)=2i\lambda^{2}-4i\lambda\lambda_{0}=2i(\lambda-\lambda_{0})^{2}-2i\lambda_{0}^{2}.
\end{align}

On the basis of the above result and the nonlinear steepest descent method which was invented by Deift and Zhou, we begin to deform the original Riemann-Hilbert problem as follows.

{\bf Step 1}
The purpose of the first step is to find the triangular factorization of the jump matrix. If not, it is achieved by introducing the scalar Riemann-Hilbert problem.

Taking $\lambda-\lambda_{0}=\rho e^{i\beta}$ and substituting it into \eqref{5.3}, one has
\begin{align}
\varphi(\lambda)=i\left(2\rho^{2}\cos2\beta-2\lambda^{2}_{0}\right)-2\rho^{2}\sin2\beta,
\end{align}
then
\begin{align}
Re\left(\varphi(\lambda)\right)=-2\rho^{2}\sin2\beta.
\end{align}
It is easy to see that $\sin2\beta>0$ when $0<\beta<\pi/2$ or $\pi<\beta<3\pi/2$, thus $Re\left(\varphi(\lambda)\right)<0$. On the other hand, when $\pi/2<\beta<\pi$ or $3\pi/2<\beta<2\pi$,  we have $\sin2\beta<0$, so that $Re\left(\varphi(\lambda)\right)>0$. That is shown as Fig.1.

\centerline{\begin{tikzpicture}[scale=0.5]
\path [fill=pink] (0,0)--(5,0) to (5,5) -- (0,5);
\path [fill=pink] (0,0)--(-5,0) to (-5,-5) -- (0,-5);
\draw[->][thick](-5,0)--(5,0)[thick]node[right]{$Rez$};
\draw[->][thick](0,-5)--(0,5)[thick]node[above]{$Imz$};
\draw[fill] (2.5,2.5)node[below]{$D_{1}$};
\draw[fill] (2.5,-2.5)node[below]{$D_{2}$};
\draw[fill] (-2.5,2.5)node[below]{$D_{4}$};
\draw[fill] (-2.5,-2.5)node[below]{$D_{3}$};
\end{tikzpicture}}
\centerline{\noindent {\small \textbf{Fig. 1} Exponential decaying domains.}}

According to the analysis of $\varphi(\lambda)$, the jump matrix $\tilde{J}(\tilde{x},t;\lambda)$ enjoys the lower/upper triangular factorization as
\begin{align}
\tilde{J}(\tilde{x},t;\lambda)=
\left(\begin{array}{cc}
      1 & 0 \\
      r^{*}(\lambda)e^{2t\varphi(\lambda)} & 1 \\
        \end{array}\right)\left(\begin{array}{cc}
      1 & r(\lambda)e^{-2t\varphi(\lambda)} \\
      0 & 1 \\
        \end{array}\right),
\end{align}
and the upper/diagonal/lower factorization as
\begin{align}
\tilde{J}(\tilde{x},t;\lambda)=
&\left(\begin{array}{cc}
      1 & \frac{r(\lambda)}{1+|r(\lambda)|^{2}}e^{-2t\varphi(\lambda)} \\
      0 & 1 \\
        \end{array}\right)\left(\begin{array}{cc}
      \frac{1}{1+|r(\lambda)|^{2}} & 0 \\
      0 & 1+|r(\lambda)|^{2} \\
        \end{array}\right)\notag\\
        &\times \left(\begin{array}{cc}
      1 & 0 \\
      \frac{r^{*}(\lambda)}{1+|r(\lambda)|^{2}}e^{2t\varphi(\lambda)} & 1 \\
        \end{array}\right).
\end{align}

It is noticed that there is a diagonal matrix in the second decomposition, so we introduce a scalar RH problem as
\begin{itemize}
\item $\delta(\lambda)$ is analytic in $\mathbb{C}\setminus\mathbb{R}$;
\item the jump condition is
\begin{align*}
\delta_{+}(\lambda)=\left\{\begin{aligned}
&\delta_{-}(\lambda)(1+|r(\lambda)|^{2}), \quad \lambda<\lambda_{0},\\
&\delta_{+}(\lambda)=\delta_{-}, \quad \lambda>\lambda_{0};
\end{aligned}\right.
\end{align*}
\item $\delta(\lambda)\rightarrow1$ as $\lambda\rightarrow\infty$.
\end{itemize}

According to the Plemelj's formula, $\delta(\lambda)$ can be solved as
\begin{align}
\delta(\lambda)=\exp\left[\frac{1}{2\pi i}\int^{\lambda_{0}}_{-\infty}\frac{\log(1+|r(s)|^{2})}{s-\lambda}ds\right].
\end{align}

Furthermore, $\delta(\lambda)$ can be rewritten as
\begin{align}
\delta(\lambda)=(\lambda-\lambda_{0})^{i\upsilon}e^{\chi(\lambda)},
\end{align}
where
\begin{align*}
&\upsilon=-\frac{1}{2\pi}\log(1+|r(\lambda_{0})|^{2}),\\
&\chi\left(\lambda\right)=-\frac{1}{2\pi i}\int^{\lambda_{0}}_{-\infty}\log(\lambda-s)d\log(1+|r(s)|^{2}).
\end{align*}

Based on the above analysis, define the transformation as
\begin{align}
M^{(1)}(\tilde{x},t;\lambda)=\tilde{M}(\tilde{x},t;\lambda)\delta(\lambda)^{\sigma_{3}},
\end{align}
then $M^{(1)}(\tilde{x},t;\lambda)$ admits the RH problem as follows
\begin{itemize}
\item $M^{(1)}(\tilde{x},t;\lambda)$ is analytic in $\mathbb{C}\backslash\mathbb{R}$.
\item The jump condition is
\begin{align}
M^{(1)}_{+}(\tilde{x},t;\lambda)=M^{(1)}_{-}(\tilde{x},t;\lambda)J^{(1)}(\tilde{x},t;\lambda),
\end{align}
with
{\small \begin{align*}
J^{(1)}(\tilde{x},t;\lambda)=\left\{\begin{aligned}
&\left(\begin{array}{cc}
      1 & 0 \\
      e^{2t\varphi(\lambda)}r^{*}(\lambda)\delta^{2}(\lambda) & 1 \\
        \end{array}\right)\left(\begin{array}{cc}
      1 & e^{-2t\varphi(\lambda)}r(\lambda)\delta^{-2}(\lambda) \\
      0 & 1 \\
        \end{array}\right), \quad \lambda>\lambda_{0},\\
&\left(\begin{array}{cc}
      1 & e^{-2t\varphi(\lambda)}\frac{r(\lambda)}{1+|r(\lambda)|^{2}}\delta^{-2}_{-}(\lambda) \\
      0 & 1 \\
        \end{array}\right)\left(\begin{array}{cc}
      1 & 0 \\
      e^{2t\varphi(\lambda)}\frac{r^{*}(\lambda)}{1+|r(\lambda)|^{2}}\delta^{2}_{+}(\lambda) & 1 \\
        \end{array}\right), \quad \lambda<\lambda_{0}.
               \end{aligned}\right.
\end{align*}}
\item $M^{(1)}_{+}(\tilde{x},t;\lambda)\rightarrow I$, \quad as $\lambda\rightarrow\infty$.
\end{itemize}

{\bf Step 2}
The second step is to deform the contour, analytic approximations about $r(\lambda)$ are firstly introduced.

For simplicity, denote
\begin{align*}
& r(\lambda)\doteq r_{1}(\lambda),\quad r^{*}(\lambda)\doteq r_{2}(\lambda),\notag\\
&
\frac{r(\lambda)}{1+|r(\lambda)|^{2}}\doteq r_{3}(\lambda),\quad
\frac{r^{*}(\lambda)}{1+|r(\lambda)|^{2}}\doteq r_{4}(\lambda).
\end{align*}

As shown in Fig. 1, the complex $\lambda$-plane is divided into four parts, i.e. $D_{j}, j=1,2,3,4,$ so that
\begin{align*}
\{\lambda|Re\left(\varphi(\lambda)\right)<0\}=D_{1}\cup D_{3},\quad
\{\lambda|Re\left(\varphi(\lambda)\right)>0\}=D_{2}\cup D_{4}.
\end{align*}
Making the decomposition as follows
\begin{align}
r_{j}(\lambda)=\left\{\begin{aligned}
&r_{j,a}(\tilde{x},t;\lambda)+r_{j,r}(\tilde{x},t;\lambda),\quad j=1,2,|\lambda|>\lambda_{0},\quad \lambda\in\mathbb{R},\\
&r_{j,a}(\tilde{x},t;\lambda)+r_{j,r}(\tilde{x},t;\lambda),\quad j=3,4,|\lambda|<\lambda_{0},\quad \lambda\in\mathbb{R}.
\end{aligned}\right.
\end{align}

Due to $r(\lambda)\in C^{11}(\mathbb{R}_{+})$, expanding on it obtains
\begin{align}
&r^{(n)}_{1}(\lambda)=\frac{d^{n}}{d\lambda^{n}}\left(\sum^{6}_{j=0}p_{j}(\zeta)(\lambda-\lambda_{0})^{j}\right)
+O\left((\lambda-\lambda_{0})^{7-n}\right), \notag\\
 &\lambda\rightarrow\lambda_{0}, \quad n=0,1,2,
\end{align}
with $p_{j}(\zeta)=r^{(j)}_{1}(\lambda_{0})/j!$.

Let
\begin{align}
g_{0}(\lambda)=\sum^{10}_{j=4}\frac{a_{j}}{(\lambda-i)^{j}},
\end{align}
where $\{a_{j}\}^{10}_{4}$ make
\begin{align}\label{5.15}
g_{0}(\lambda)=\sum^{6}_{j=0}p_{j}(\zeta)(\lambda-\lambda_{0})^{j}+O\left((\lambda-\lambda_{0})^{7-n}\right),
 \quad \lambda\rightarrow\lambda_{0}.
\end{align}
Equation \eqref{5.15} imposes seven linearly independent conditions on the $a_{j}$, which can be verified directly. Thus the coefficients $a_{j}$ not only exist but are unique.

Making $g(\lambda)=r_{1}(\lambda)-g_{0}(\lambda)$, then

(i) $g_{0}(\lambda)$ is a rational function of $\lambda\in\mathbb{C}$, which has no poles in $\bar{D}_{1}$;

(ii) $g_{0}(\lambda)$ is consistent with $r_{1}(\lambda)$ to six order at $\lambda_{0}$ and to three order at $\infty$, that is
\begin{align}
\frac{d^{n}}{d\lambda^{n}}g(\lambda)=\left\{\begin{aligned}
&O\left((\lambda-\lambda_{0})^{7-n}\right),\quad \lambda\rightarrow\lambda_{0},\\
&O\left(\lambda^{-4+2n}\right),\quad \lambda\rightarrow\infty,
\end{aligned}\right.
\end{align}
with $\lambda\in\mathbb{R}$ and $n=0,1,2$.

Next, we begin to analyze the decomposition of $r_{1}(\lambda)$. The map $\lambda\rightarrow\tau$ is a bijection $[\lambda_{0}, +\infty)\rightarrow[-2\lambda^{2}_{0}, +\infty)$, which is defined as
\begin{align}
\tau=-i\varphi(\lambda)=2\lambda^{2}-4\lambda\lambda_{0}.
\end{align}
Hence we define function $G(\zeta, \tau)$ as
\begin{align}\label{5.18}
G(\zeta, \tau)=\left\{\begin{aligned}
&\frac{(\lambda-i)^{3}}{\lambda-\lambda_{0}}g(\zeta, \tau),\quad \tau\geq-2\lambda^{2}_{0},\\
&0,\quad \tau<-2\lambda^{2}_{0},
\end{aligned}\right.
\end{align}
then the function $G(\zeta, \tau)$ is $C^{11}$ and
\begin{align}
\begin{aligned}
\frac{\partial^{n}G}{\partial\tau^{n}}(\zeta, \tau)
&=\frac{\partial^{n}G}{\partial\lambda^{n}}\left(\frac{\partial\lambda}{\partial\tau}\right)^{n}\\
&=\left(\frac{1}{4(\lambda-\lambda_{0})}\frac{\partial}{\partial\lambda}\right)^{n}
\left[\frac{(\lambda-i)^{3}}{\lambda-\lambda_{0}}g(\zeta, \tau)\right],\quad \tau\geq-2\lambda^{2}_{0}.
\end{aligned}
\end{align}
For this reason, we have
\begin{align}
\left\|\frac{\partial^{n}G}{\partial\tau^{n}}\right\|_{L^{2}(\mathbb{R})}<\infty, \quad n=0,1,2.
\end{align}
In particular, $G(\zeta,\cdot)$ belongs to $H^{2}(\mathbb{R})$.

From the above, Fourier transform $\hat{G}(\zeta, s)$ can be deduced as
\begin{align}
\hat{G}(\zeta, s)=\frac{1}{2\pi}\int_{\mathbb{R}}G(\zeta, \tau)e^{-i\tau s}d\tau,
\end{align}
where
\begin{align}\label{5.23}
G(\zeta, \tau)=\int_{\mathbb{R}}\hat{G}(\zeta, s)e^{i\tau s}ds,
\end{align}
which satisfies the Plancherel theorem, i.e.
\begin{align}
\left\|s^{2}\hat{G}(s)\right\|_{L^{2}(\mathbb{R})}<\infty.
\end{align}

Combining \eqref{5.18} and \eqref{5.23} can get
\begin{align}
\frac{\lambda-\lambda_{0}}{(\lambda-i)^{3}}\int_{\mathbb{R}}\hat{G}(\zeta, s)e^{i\tau s}ds=
\left\{\begin{aligned}
&g(\zeta, \lambda),\quad \lambda\geq\lambda_{0},\\
&0,\quad \lambda<\lambda_{0}.
\end{aligned}\right.
\end{align}
Rewriting $g(\zeta, \lambda)$ as
\begin{align}
g(\zeta, \lambda)=g_{a}(\tilde{x},t;\lambda)+g_{r}(\tilde{x},t;\lambda),
\end{align}
where
\begin{align}
&g_{a}(\tilde{x},t;\lambda)=\frac{\lambda-\lambda_{0}}{(\lambda-i)^{3}}\int^{+\infty}_{-\frac{t}{4}}\hat{G}(\zeta, s)e^{i\tau s}ds,\quad t>0, \lambda\in\bar{D}_{1},\\
&g_{r}(\tilde{x},t;\lambda)=\frac{\lambda-\lambda_{0}}{(\lambda-i)^{3}}\int^{-\frac{t}{4}}_{-\infty}\hat{G}(\zeta, s)e^{i\tau s}ds,\quad t>0, \lambda\geq\lambda_{0},
\end{align}
therefore one implies that $g_{a}(\tilde{x},t;\cdot)$ is analytic in $D_{1}$ and continuous in $\bar{D}_{1}$. Moreover, estimating $g_{a}(\tilde{x},t;\lambda)$ and $g_{r}(\tilde{x},t;\lambda)$ can get
\begin{align}
\begin{aligned}
|g_{a}(\tilde{x},t;\lambda)|&\leq\frac{|\lambda-\lambda_{0}|}{|\lambda-i|^{3}}\|\hat{G}(s)\|_{L^{1}(\mathbb{R})}
\sup_{s\geq-\frac{t}{4}}e^{sRei\tau(\zeta,\lambda)}\\
&\leq\frac{C|\lambda-\lambda_{0}|}{|\lambda-i|^{3}}e^{\frac{t}{4}|Rei\tau(\zeta,\lambda)|},\quad t>0, \lambda\in\bar{D}_{1},
\end{aligned}
\end{align}
and
\begin{align}
\begin{aligned}
|g_{r}(\tilde{x},t;\lambda)|&\leq\frac{|\lambda-\lambda_{0}|}{|\lambda-i|^{3}}
\int^{-\frac{t}{4}}_{-\infty}s^{2}|\hat{G}(\zeta, s)|s^{-2}ds\\
&\leq\frac{C}{1+|\lambda|^{2}}\|s^{2}|\hat{G}(\zeta, s)\|_{L^{2}(\mathbb{R})}
\sqrt{\int^{-\frac{t}{4}}_{-\infty}s^{-4}ds}\\
&\leq\frac{C}{1+|\lambda|^{2}}t^{-\frac{3}{2}},\quad t>0, \lambda\geq\lambda_{0}.
\end{aligned}
\end{align}
Therefore the $L^{1}$, $L^{2}$ and $L^{\infty}$ norm of $g_{r}$ are $O(t^{-\frac{3}{2}})$ on $(\lambda_{0}, \infty)$.

Taking
\begin{align}
&r_{1,a}(\tilde{x},t;\lambda)=g_{0}(\tilde{x},t;\lambda)+g_{a}(\tilde{x},t;\lambda),\quad \lambda\in\bar{D}_{1},\\
&r_{1,r}(\tilde{x},t;\lambda)=g_{r}(\tilde{x},t;\lambda),\quad \lambda\geq\lambda_{0},
\end{align}
then we obtain the decomposition of $r_{1}$. The case of $r_{j}, j=2,3,4,$ can also be studied in the same way.

According to the above analysis, the Lemma is given as follows.
\begin{lem}
For the decomposition of $r(\lambda)$, the related properties are listed below:

 \begin{itemize}
 \item[(i)]
 $r_{j,a}(\tilde{x},t;\lambda)$ are analytic in $D_{j}$ and continuous in $\bar{D}_{j}$ for $t>0$, which satisfy
\begin{align}
\begin{aligned}
\left|r_{j,a}(\tilde{x},t;\lambda)-r_{j}(\lambda)\right|\leq&C|\lambda-\lambda_{0}|
e^{\frac{t}{4}|Rei\tau(\zeta,\lambda)|},\\ &\lambda\in\bar{D}_{j},~ |\lambda|<K,~ t>0,~ j=1,\ldots,4,
\end{aligned}
\end{align}
where each $K>0$ and $C$ is independent of $\zeta, t, \lambda.$

\item[(ii)] $r_{1,a}(\tilde{x},t;\lambda)$ and $r_{2,a}(\tilde{x},t;\lambda)$ satisfy
\begin{align}
|r_{j,a}(\tilde{x},t;\lambda)|\leq\frac{C}{1+|\lambda|}e^{\frac{t}{4}|Rei\tau(\zeta,\lambda)|},
\quad \lambda\in\bar{D}_{j},~t>0,~ j=1,2,
\end{align}
where $C$ is independent of $\zeta, t, \lambda.$

\item[(iii)] The $L^{1}$, $L^{2}$ and $L^{\infty}$ norm of $r_{1,r}$ and $r_{2,r}$ are $O(t^{-\frac{3}{2}})$ on $(\lambda_{0}, +\infty)$ as $t\rightarrow\infty$.

\item[(iv)] The $L^{1}$, $L^{2}$ and $L^{\infty}$ norm of $r_{3,r}$ and $r_{4,r}$ are $O(t^{-\frac{3}{2}})$ on $(-\infty, \lambda_{0})$ as $t\rightarrow\infty$.
\end{itemize}
\end{lem}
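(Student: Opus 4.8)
The plan is to establish the four assertions by generalizing the explicit construction already carried out in full for $r_1$, i.e. the decomposition $r_1 = r_{1,a} + r_{1,r}$ with $r_{1,a} = g_0 + g_a$ and $r_{1,r} = g_r$. First I would fix notation for the remaining cases: for $j=2$ one works on $[\lambda_0,+\infty)$ again but with $r_2 = r^*$ and the analytic continuation now sought into $\bar D_2$ (where $\mathrm{Re}\,\varphi > 0$), so the roles of the exponentials $e^{2t\varphi}$ and $e^{-2t\varphi}$ are swapped; for $j=3,4$ one works instead on the complementary half-line $(-\infty,\lambda_0)$, decomposes the functions $r/(1+|r|^2)$ and $r^*/(1+|r|^2)$, and continues analytically into $\bar D_3$ and $\bar D_4$ respectively. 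In each case the same three ingredients reappear: a rational interpolant $g_0$ matching $r_j$ to sixth order at $\lambda_0$ and to third order at $\infty$ (replacing $\lambda - i$ by a pole placed off the relevant closed domain $\bar D_j$, so $g_0$ has no poles there), the change of variable $\tau = -i\varphi(\lambda) = 2\lambda^2 - 4\lambda\lambda_0$ which is a bijection of the appropriate half-line onto $[-2\lambda_0^2,\infty)$, and the Fourier-split of the remainder $g = r_j - g_0$ using $G(\zeta,\cdot)\in H^2(\mathbb{R})$ via the Plancherel bound $\|s^2\hat G\|_{L^2} < \infty$.

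With that setup in place, assertion (i) follows exactly as the estimate $|g_a(\tilde x,t;\lambda)| \le C|\lambda-\lambda_0||\lambda-i|^{-3} e^{\frac{t}{4}|\mathrm{Re}\,i\tau|}$ was derived: one writes $r_{j,a} - r_j = (g_0 - r_j) + g_a = -g_r$-type terms plus the analytic piece, uses that $g_0$ agrees with $r_j$ to sixth order at $\lambda_0$ to absorb the polynomial factor, and bounds the analytic Fourier integral $\int_{-t/4}^{+\infty}\hat G(\zeta,s)e^{i\tau s}\,ds$ by $\|\hat G\|_{L^1}\sup_{s\ge -t/4}e^{s\,\mathrm{Re}\,i\tau}$. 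The factor $e^{\frac{t}{4}|\mathrm{Re}\,i\tau(\zeta,\lambda)|}$ is precisely this supremum (note the sign of $\mathrm{Re}\,i\tau$ flips between $D_1$ and $D_2$, which is why the bound is stated with an absolute value and why it is useful only for $|\lambda| < K$, where the exponent is controlled against the eventual Gaussian decay). Assertion (ii) is the same computation without subtracting $g_0$: since $r(\lambda)\in C^{11}$ and decays, $r_{j,a}$ itself inherits the $C/(1+|\lambda|)$ decay from the $(\lambda-\lambda_0)/(\lambda-i)^3$ prefactor and from $r_j$ being bounded, uniformly in $\zeta$. Assertions (iii) and (iv) are the remainder estimate: exactly as for $g_r$, one writes $\int_{-\infty}^{-t/4}\hat G(\zeta,s)e^{i\tau s}\,ds = \int_{-\infty}^{-t/4} s^2\hat G(\zeta,s)\cdot s^{-2}e^{i\tau s}\,ds$, applies Cauchy--Schwarz with the Plancherel bound $\|s^2\hat G\|_{L^2}<\infty$ and $\sqrt{\int_{-\infty}^{-t/4}s^{-4}\,ds} = O(t^{-3/2})$, and multiplies by the bounded prefactor to get the $O(t^{-3/2})$ control of the $L^1$, $L^2$, $L^\infty$ norms on the relevant half-line; the only change for $j=3,4$ is that the half-line of integration in $\lambda$ is $(-\infty,\lambda_0)$ and the pole of $g_0$ is moved accordingly.

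The main obstacle is bookkeeping rather than a genuinely new difficulty: one must check, case by case, that the auxiliary pole in $g_0$ (the analogue of $\lambda = i$) can indeed be chosen outside the closed sector $\bar D_j$ into which analyticity is required, so that $g_0$ is genuinely analytic there, and that the bijection $\lambda\mapsto\tau$ together with the sign of $\mathrm{Re}\,i\tau$ is the correct one for each $D_j$ — in $D_2, D_4$ one is continuing into the region where $\mathrm{Re}\,\varphi>0$, so the convergent Fourier tail is the one over $s\le -t/4$ versus $s\ge -t/4$ and these must be matched to the exponential that actually decays there. Once the geometry is lined up correctly, the four estimates are literally the ones already proved for $r_1$, and the regularity input $r\in C^{11}(\mathbb{R})$ with Schwartz decay (inherited from $q_0\in S(\mathbb{R})$) guarantees $G(\zeta,\cdot)\in H^2$ uniformly in $\zeta$, which is all that the bounds require. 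I would therefore present the $j=1$ argument as the template, then remark that $j=2,3,4$ follow by the symmetry $r\leftrightarrow r^*$, the substitution $(\lambda_0,\infty)\leftrightarrow(-\infty,\lambda_0)$, and relocating the interpolation pole, spelling out only the points where the sign of $\mathrm{Re}\,i\tau$ enters.
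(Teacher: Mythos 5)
Your proposal is correct and follows essentially the same route as the paper: the paper's ``proof'' of this lemma is precisely the explicit construction it carries out for $r_1$ (the rational interpolant $g_0$ matching to sixth order at $\lambda_0$, the change of variable $\tau=-i\varphi$, the Fourier split of $\hat G$ at $s=-t/4$ with the $L^1$ bound for $g_a$ and the Cauchy--Schwarz/Plancherel bound giving $O(t^{-3/2})$ for $g_r$), followed by the remark that $r_2,r_3,r_4$ are handled ``in the same way.'' Your additional bookkeeping remarks (relocating the interpolation pole outside $\bar D_j$ and tracking the sign of $\mathrm{Re}\,i\tau$ on each sector) are exactly the points the paper leaves implicit, so no genuinely different idea is involved.
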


Besides, before the deformation of the RH problem, we introduce oriented counter $\Gamma$ and open sets $\{V_j\}_1^6$ as depicted in Fig.2.

\centerline{\begin{tikzpicture}[scale=0.5]
\draw[->][thick](-5,0)--(-2.5,0);
\draw[->][thick](-2.5,0)--(2.5,0);
\draw[-][thick](2.5,0)--(5,0);
\draw[->][thick](-4,4)--(-2,2);
\draw[-][thick](-2,2)--(2,-2);
\draw[<-][thick](2,-2)--(4,-4);
\draw[-][thick](-4,-4)--(-2,-2);
\draw[<->][thick](-2,-2)--(2,2);
\draw[-][thick](-2,-2)--(4,4);
\draw[fill] (2.5,1.25)node[below]{\small{$V_{1}$}};
\draw[fill] (0,2.5)node[below]{\small{$V_{2}$}};
\draw[fill] (-2.5,1.25)node[below]{\small{$V_{3}$}};
\draw[fill] (-2.5,-1.25)node[above]{\small{$V_{4}$}};
\draw[fill] (0,-2.5)node[above]{\small{$V_{5}$}};
\draw[fill] (2.5,-1.25)node[above]{\small{$V_{6}$}};
\end{tikzpicture}}
\centerline{\noindent {\small \textbf{Fig. 2} The jump contour $\Gamma$ and the domains $\{V_{j}\}^{6}_{1}$. }}

As a result, defining a transformation
\begin{align}\label{Trans-2}
M^{(2)}(\tilde{x},t;\lambda)=M^{(1)}(\tilde{x},t;\lambda)R,
\end{align}
here
\begin{align}
R=\left\{\begin{aligned}
&\left(
  \begin{array}{cc}
    1 & -e^{-2t\varphi(\lambda)}r_{1,a}(\lambda)\delta^{-2} \\
    0 & 1 \\
  \end{array}
\right), ~&\lambda\in V_{1},\\
&\left(
  \begin{array}{cc}
    1 &  0\\
    e^{2t\varphi(\lambda)}r_{2,a}(\lambda)\delta^{2} & 1 \\
  \end{array}
\right), ~&\lambda\in V_{6},\\
&\left(
  \begin{array}{cc}
    1 & e^{-2t\varphi(\lambda)}r_{3,a}(\lambda)\delta^{-2}_{-} \\
    0 & 1 \\
  \end{array}
\right), ~&\lambda\in V_{4},\\
&\left(
  \begin{array}{cc}
    1 & 0 \\
    -e^{2t\varphi(\lambda)}r_{4,a}(\lambda)\delta^{2}_{+} & 1 \\
  \end{array}
\right),~ &\lambda\in V_{3},\\
&\left(
  \begin{array}{cc}
    1 & 0 \\
    0 & 1 \\
  \end{array}
\right),~ &\lambda\in V_{2}\cup V_{5}.
\end{aligned}
\right.
\end{align}

Then the function $M^{(2)}(\tilde{x},t;\lambda)$ satisfies the RH problem.\\
$\bullet$ $M^{(2)}(\tilde{x},t;\lambda)$ is analytic in $\mathbb{C}\backslash\Gamma$.\\
$\bullet$ The jump condition is
\begin{align}
M^{(2)}_{+}(\tilde{x},t;\lambda)=M^{(2)}_{-}(\tilde{x},t;\lambda)J^{(2)}(\tilde{x},t;\lambda),
\end{align}
with
\begin{align*}
J^{(2)}(\tilde{x},t;\lambda)=\left\{\begin{aligned}
&\left(
  \begin{array}{cc}
    1 & e^{-2t\varphi(\lambda)}r_{1,a}(\lambda)\delta^{-2} \\
    0 & 1 \\
  \end{array}
\right), ~&\lambda\in V_{1}\cap V_{2},\\
&\left(
  \begin{array}{cc}
    1 &  0\\
    -e^{2t\varphi(\lambda)}r_{2,a}(\lambda)\delta^{2} & 1 \\
  \end{array}
\right), ~&\lambda\in V_{5}\cap V_{6},\\
&\left(
  \begin{array}{cc}
    1 & -e^{-2t\varphi(\lambda)}r_{3,a}(\lambda)\delta^{-2}_{-} \\
    0 & 1 \\
  \end{array}
\right), ~&\lambda\in V_{2}\cap V_{3},\\
&\left(
  \begin{array}{cc}
    1 & 0 \\
    e^{2t\varphi(\lambda)}r_{4,a}(\lambda)\delta^{2}_{+} & 1 \\
  \end{array}
\right),~ &\lambda\in V_{4}\cap V_{5},\\
&\left(
  \begin{array}{cc}
    1 & e^{-2t\varphi(\lambda)}r_{1,r}(\lambda)\delta^{-2} \\
    e^{2t\varphi(\lambda)}r_{2,r}(\lambda)\delta^{2} & 1+r_{1,r}(\lambda)r_{2,r}(\lambda) \\
  \end{array}
\right), ~&\lambda\in V_{1}\cap V_{6},\\
&\left(
  \begin{array}{cc}
    1+r_{3,r}(\lambda)r_{4,r}(\lambda) & e^{-2t\varphi(\lambda)}r_{3,r}(\lambda)\delta^{-2}_{-} \\
    e^{2t\varphi(\lambda)}r_{4,r}(\lambda)\delta^{2}_{+} & 1 \\
  \end{array}
\right), ~&\lambda\in V_{3}\cap V_{4}.
               \end{aligned}\right.
\end{align*}
$\bullet$ $M^{(2)}(\tilde{x},t;\lambda)\rightarrow I$, \quad as $\lambda\rightarrow\infty$.

{\bf Step 3}
For the RH problem related to $M^{(2)}(\tilde{x},t;\lambda)$, we introduce a parabolic cylindrical function to transform it into a solvable RH problem.

To begin with, let $X$ denotes $X=X_{1}\cup\ldots\cup X_{4}\subset\mathbb{C}$, where
\begin{align*}
&X_{1}=\{se^{\frac{\pi}{4}i}|0\leq s<\infty\}, \quad X_{2}=\{se^{-\frac{\pi}{4}i}|0\leq s<\infty\},\\
&X_{3}=\{se^{-\frac{3\pi}{4}i}|0\leq s<\infty\}, \quad X_{4}=\{se^{\frac{3\pi}{4}i}|0\leq s<\infty\},
\end{align*}
which are shown as Fig. 3.

\centerline{\begin{tikzpicture}[scale=0.5]
\draw[->][thick](-4,4)--(-2,2);
\draw[-][thick](-2,2)--(2,-2);
\draw[<-][thick](2,-2)--(4,-4);
\draw[-][thick](-4,-4)--(-2,-2);
\draw[<->][thick](-2,-2)--(2,2);
\draw[-][thick](-2,-2)--(4,4);
\draw[fill] (2,2)node[above]{$X_{1}$};
\draw[fill] (-2,2)node[above]{$X_{4}$};
\draw[fill] (2,-2)node[below]{$X_{2}$};
\draw[fill] (-2,-2)node[below]{$X_{3}$};
\end{tikzpicture}}
\centerline{\noindent {\small \textbf{Fig. 3} The jump contour $X$.}}

Next, we introduce a parabolic cylindrical function, i.e. in complex $z$-plane, $m^{X}(\zeta,z)$ is the solution of the RH problem as
\begin{align}
&m^{X}_{+}(\zeta,z)=m^{X}_{-}(\zeta,z)V^{X}(\zeta,z), \quad z\in X,\\
&m^{X}(\zeta,z)\rightarrow I, \quad z\rightarrow\infty,
\end{align}
where the jump matrix is
\begin{align}
V^{X}(\zeta,z)=\left\{\begin{aligned}
&\left(
  \begin{array}{cc}
    1 & q(\zeta)z^{-2i\upsilon(\zeta)}e^{\frac{iz^{2}}{2}} \\
    0 & 1 \\
  \end{array}
\right), \qquad & z \in X_{1},\\
&\left(
  \begin{array}{cc}
    1 &  0\\
    q^{*}(\zeta)z^{2i\upsilon(\zeta)}e^{-\frac{iz^2}{2}} & 1 \\
  \end{array}
\right), \qquad & z \in X_{2},\\
&\left(
  \begin{array}{cc}
    1 & -\frac{q(\zeta)}{1+|q(\zeta)|^{2}}z^{-2i\nu(\zeta)}e^{\frac{iz^2}{2}} \\
    0 & 1 \\
  \end{array}
\right), \qquad & z \in X_{3},\\
&\left(
  \begin{array}{cc}
    1 & 0 \\
    -\frac{q^{*}(\zeta)}{1+|q(\zeta)|^{2}}z^{2i\nu(\zeta)}e^{-\frac{iz^2}{2}} & 1 \\
  \end{array}
\right), \qquad & z \in X_{4}.
\end{aligned}\right.
\end{align}

Moreover, matching $V^{X}(\zeta,z)$ to $J^{(2)}(\tilde{x},t;\lambda)$ and calculating directly can get that $q(\zeta)$ is
\begin{align}
q(\zeta)=e^{-2\chi(\zeta,\lambda_{0})}r(\lambda_{0})e^{2i\upsilon(\zeta)\ln (\sqrt{-8})}.
\end{align}

For the RH problem related to $m^{X}(\zeta,z)$, the following Lemma are directly given.
\begin{lem}
The unique solution for the RH problem related to $m^{X}(\zeta,z)$  is
\begin{align}
m^{X}(\zeta,z)=I+\frac{i}{z}\left(\begin{array}{cc}
                                    0 & \gamma^{X}(\zeta) \\
                                    -\beta^{X}(\zeta) & 0
                                  \end{array}\right)
+O\left(\frac{1}{z^{2}}\right), \quad z\rightarrow\infty,
\end{align}
where
\begin{align}
&\beta^{X}(\zeta)=\frac{\sqrt{2\pi}e^{\frac{\pi}{4}i}e^{-\frac{\pi}{2}\upsilon(\zeta)}}
{q(\zeta)\Gamma(-i\upsilon(\zeta))},\\
&\gamma^{X}(\zeta)=\frac{\sqrt{2\pi}e^{-\frac{\pi}{4}i}e^{-\frac{\pi}{2}\upsilon(\zeta)}}
{q^{*}(\zeta)\Gamma(i\upsilon(\zeta))}.
\end{align}
\end{lem}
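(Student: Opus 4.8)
The plan is to recognize this as the standard parabolic-cylinder model problem and to proceed in four moves: uniqueness, reduction to a linear ODE, solution via Weber functions, and identification of the constants. \emph{Uniqueness} is the quickest part. Since $\det V^{X}(\zeta,z)\equiv 1$ on each ray $X_{j}$, every solution $m^{X}$ has $\det m^{X}\equiv 1$ and is hence invertible; if $\widetilde{m}^{X}$ were a second solution, then $m^{X}(\widetilde{m}^{X})^{-1}$ has no jump across $X$, is analytic on $\mathbb{C}\setminus\{0\}$, and remains bounded as $z\to 0$ (the jump matrices are bounded near the origin), so it extends to an entire matrix function that tends to $I$ at $\infty$; Liouville's theorem then gives $m^{X}\equiv\widetilde{m}^{X}$.

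For existence and the explicit coefficient, the idea is to strip the $z$-dependence out of the jump matrices. After fixing a branch of $z^{i\upsilon(\zeta)}$ and setting
\[
\Psi(\zeta,z)=m^{X}(\zeta,z)\,z^{-i\upsilon(\zeta)\sigma_{3}}\,e^{\frac{i z^{2}}{4}\sigma_{3}},
\]
the factors $z^{\mp 2i\upsilon}e^{\pm i z^{2}/2}$ in $V^{X}$ are absorbed and $\Psi$ jumps across $X_{1},\dots,X_{4}$ by \emph{constant} matrices. Hence $\Psi_{z}\Psi^{-1}$ is single-valued and analytic on $\mathbb{C}\setminus\{0\}$, and the normalization $m^{X}\to I$ together with the controlled behaviour at $z=0$ forces it to be a polynomial of degree one in $z$:
\[
\Psi_{z}=\Bigl(\tfrac{i z}{2}\sigma_{3}+B(\zeta)\Bigr)\Psi,\qquad
B(\zeta)=\begin{pmatrix}0&\beta^{X}(\zeta)\\[3pt]-\gamma^{X}(\zeta)&0\end{pmatrix},
\]
the off-diagonal entries of $B$ being exactly those in the $O(z^{-1})$ term of $m^{X}$, while the product $\beta^{X}\gamma^{X}$ is determined by the compatibility of the ODE and turns out to be a multiple of $\upsilon(\zeta)$. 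Eliminating one component reduces the system to a form of the parabolic-cylinder (Weber) equation, whose solutions are expressible through the Weber functions $D_{a}$ with rotated arguments $e^{\pm i\pi/4}z$ and parameter $a$ fixed by $\upsilon(\zeta)$.

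It then remains to pin down the two scalars. I would determine which Weber solution occurs in each of the sectors bounded by $X_{1},\dots,X_{4}$ by matching $\Psi\sim z^{-i\upsilon\sigma_{3}}e^{i z^{2}\sigma_{3}/4}$ as $z\to\infty$ against the classical large-argument asymptotics of $D_{a}$, and then impose the four constant jump relations. This step is the computational heart: applying the connection formula
\[
D_{a}(z)=\frac{\Gamma(a+1)}{\sqrt{2\pi}}\Bigl(e^{i\pi a/2}D_{-a-1}(iz)+e^{-i\pi a/2}D_{-a-1}(-iz)\Bigr),
\]
together with $\Gamma(a+1)=a\,\Gamma(a)$ and the reflection formula $\Gamma(i\upsilon)\,\Gamma(1-i\upsilon)=\pi/\sin(\pi i\upsilon)$ (equivalently $|\Gamma(i\upsilon)|^{2}=\pi/(\upsilon\sinh\pi\upsilon)$), the matching conditions collapse to
\[
\beta^{X}(\zeta)=\frac{\sqrt{2\pi}\,e^{\pi i/4}\,e^{-\pi\upsilon(\zeta)/2}}{q(\zeta)\,\Gamma(-i\upsilon(\zeta))},
\qquad
\gamma^{X}(\zeta)=\frac{\sqrt{2\pi}\,e^{-\pi i/4}\,e^{-\pi\upsilon(\zeta)/2}}{q^{*}(\zeta)\,\Gamma(i\upsilon(\zeta))}.
\]
As a consistency check, the Schwarz-type symmetry of $V^{X}$ inherited from the $Q$-symmetry that $\phi_{\pm}$ and $M$ already satisfy forces the same symmetry on $m^{X}$, which yields $\gamma^{X}(\zeta)=\overline{\beta^{X}(\zeta)}$ --- manifest in the formulas above because $\upsilon$ is real --- and also halves the amount of matching one must actually carry out.

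The main obstacle is precisely that last matching: executing the four-ray computation without sign or branch errors, i.e. normalizing the Weber functions correctly in each sector, tracking the $z^{\pm 2i\upsilon}$ factors across the chosen cut, and reducing the resulting ratios of $D_{a}$'s and $\Gamma$-values to the stated closed forms. Everything else --- uniqueness by Liouville, the passage to Weber's equation, and reading $\beta^{X},\gamma^{X}$ off the $z^{-1}$ coefficient --- is routine once that bookkeeping is handled with care.
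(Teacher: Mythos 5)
Your proposal is essentially correct, but note that the paper does not actually prove this lemma at all: its ``proof'' is a one-line citation to Appendix B of Lenells \cite{22}, where exactly the computation you outline is carried out. Your route --- uniqueness by a Liouville argument on $m^{X}(\widetilde m^{X})^{-1}$, conjugation by $z^{-i\upsilon\sigma_{3}}e^{iz^{2}\sigma_{3}/4}$ to make the jumps constant, the resulting linear ODE $\Psi_{z}=(\tfrac{iz}{2}\sigma_{3}+B)\Psi$, reduction to Weber's equation, and matching of large-$z$ asymptotics of $D_{a}$ via the connection and reflection formulas --- is the standard Its/Deift--Zhou treatment of the parabolic-cylinder model problem, and your consistency check $\gamma^{X}=\overline{\beta^{X}}$ (valid since $\upsilon$ is real) is correct. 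Two small points to tighten if you were to write this out in full: the Liouville step needs the behaviour of $m^{X}$ at $z=0$ to be part of the RH data (boundedness there, which in turn uses that the cyclic product of the four jump matrices at the origin is $I$); and the relation for $\beta^{X}\gamma^{X}$ is not merely ``a multiple of $\upsilon$'' to be left vague --- it is pinned down (up to the sign conventions of the jump) as $\beta^{X}\gamma^{X}=\upsilon$ by the compatibility of the ODE with the $O(z^{-1})$ expansion, and this is what makes the $\Gamma$-function moduli come out right.
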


\begin{proof}
For details, please refer to Appendix B in \cite{22}.
\end{proof}

According to the above Lemma, we know that the Riemann-Hilbert problem related to $m^{X}(\zeta,z)$ can be solved accurately with a parabolic cylinder function. Next, we further set
\begin{align}
D(\zeta,t)=e^{-(2it\lambda^{2}_{0}-\chi(\lambda_{0}))\sigma_{3}}(-8t)^{-\frac{i\upsilon(\lambda_{0})}{2}\sigma_{3}},
\end{align}
and define $m_{0}(\tilde{x},t;\lambda)$ for $\lambda$ near $\lambda_{0}$ as
\begin{align}
m_{0}(\tilde{x},t;\lambda)=D(\zeta,t)m^{X}\left(\zeta,\sqrt{-8t}(\lambda-\lambda_{0})\right)D^{-1}(\zeta,t).
\end{align}

Using $m_{0}(\tilde{x},t;\lambda)$ to introduce function $M^{(3)}(\tilde{x},t;\lambda)$
\begin{align}
M^{(3)}(\tilde{x},t;\lambda)=\left\{\begin{aligned}
&M^{(2)}(\tilde{x},t;\lambda)m^{-1}_{0}(\tilde{x},t;\lambda), \quad |\lambda-\lambda_{0}|<\epsilon,\\
&M^{(2)}(\tilde{x},t;\lambda), \quad elsewhere,
\end{aligned}\right.
\end{align}
where $\epsilon=\lambda_{0}/2$.

\centerline{\begin{tikzpicture}[scale=0.5]
\draw[->][thick](-5,0)--(-2.5,0);
\draw[->][thick](-2.5,0)--(2.5,0);
\draw[-][thick](2.5,0)--(5,0);
\draw[->][thick](-4,4)--(-2,2);
\draw[-][thick](-2,2)--(2,-2);
\draw[<-][thick](2,-2)--(4,-4);
\draw[-][thick](-4,-4)--(-2,-2);
\draw[<->][thick](-2,-2)--(2,2);
\draw[-][thick](-2,-2)--(4,4);
\draw[->][thick] (0,1.5) arc (90:270:1.5);
\draw[->][thick] (0,-1.5) arc (-90:90:1.5);
\draw[fill] (0,0)node[below]{$\lambda_{0}$};
\end{tikzpicture}}
\centerline{\noindent {\small \textbf{Fig. 4}   The jump contour $\Gamma^{(3)}$.}}

According to the definition of $M^{(2)}(\tilde{x},t;\lambda)$ and $m_{0}(\tilde{x},t;\lambda)$, we know that $M^{(3)}(\tilde{x},t;\lambda)$ admits the Riemann-Hilbert problem
\begin{align}
\left\{\begin{aligned}
&M^{(3)}_{+}(\tilde{x},t;\lambda)=M^{(3)}_{-}(\tilde{x},t;\lambda)J^{(3)}(\tilde{x},t;\lambda), \quad \lambda\in\Gamma^{(3)},\\
&M^{(3)}_{+}(\tilde{x},t;\lambda)\rightarrow I, \quad \lambda\rightarrow\infty,
\end{aligned}\right.
\end{align}
where the jump matrix is
\begin{align}
J^{(3)}(\tilde{x},t;\lambda)=
\left\{\begin{aligned}
&m_{0,-}(\tilde{x},t;\lambda)J^{(2)}(\tilde{x},t;\lambda)m^{-1}_{0,+}(\tilde{x},t;\lambda), \quad |\lambda-\lambda_{0}|<\epsilon,\\
&m^{-1}_{0}(\tilde{x},t;\lambda), \quad |\lambda-\lambda_{0}|=\epsilon,\\
&J^{(2)}(\tilde{x},t;\lambda), \quad elsewhere.
\end{aligned}\right.
\end{align}

\section{Long time asymptotics}

In this section, we will discuss the long-time asymptotic solution of the WKI equation using the model Riemann-Hilbert problem.

Let $\omega:=J^{(3)}-I$. Combining Theorem 2.1 of \cite{22} can get the estimates for the columns of $\omega(x,t;\lambda)$ which allow
\begin{align}
\omega^{(j)}(\zeta,t,\cdot)=O(t^{-\frac{1}{2}\ln t}), \quad t\rightarrow\infty,
\lambda\in \Gamma\cap\{|\lambda-\lambda_{0}|<\epsilon\}, j=1,2.
\end{align}
Moreover,
\begin{align}
&\|\omega^{(j)}(\zeta,t,\cdot)\|_{L^{2}(\Gamma^{(3)})}=O(t^{-\frac{1}{2}}), \quad t\rightarrow\infty,j=1,2\\
&\|\omega(\zeta,t,\cdot)\|_{L^{1}\cap L^{2}(\Gamma^{(3)})}=O(t^{-\frac{1}{2}}), \quad t\rightarrow\infty,\label{5.3}\\
&\|\omega(\zeta,t,\cdot)\|_{L^{\infty}(\Gamma^{(3)})}=O(t^{-\frac{1}{2}}\ln t), \quad t\rightarrow\infty.\label{5.4}
\end{align}

According to $C_{\omega}f=C_{-}(f\omega)$, where $C_{-}(f\omega)$ is the boundary value of $C(f\omega)$ from the right side of $\Gamma^{(3)}$ and $C$ is the Cauchy operator related to $\Gamma^{(3)}$ as
\begin{align*}
(Cf)(z)=\frac{1}{2\pi i}\int_{\Gamma^{(3)}}\frac{f(s)}{s-z}ds, \quad z\in\mathbb{C}\setminus\Gamma^{(3)},
\end{align*}
the integral operator is defined as
\begin{align}
C_{\omega}:L^{2}(\Gamma^{(3)})+L^{\infty}(\Gamma^{(3)})\rightarrow L^{2}(\Gamma^{(3)}).
\end{align}

Using \eqref{5.3} and \eqref{5.4}, we have
\begin{align*}
\|C_{\omega}\|_{B(L^{2}(\Gamma^{(3)}))}\leq C\|\omega\|_{L^{\infty}(\Gamma^{(3)})}=O(t^{-1}\ln t),
\end{align*}
where $B(L^{2}(\Gamma^{(3)}))$ means the Banach space of bounded linear operators $L^{2}(\Gamma^{(3)})\rightarrow L^{2}(\Gamma^{(3)})$. Since $\|C_{\omega}\|_{B(L^{2}(\Gamma^{(3)}))}\rightarrow0$ as $t\rightarrow\infty$, there exists $T>0$ such that $1-C_{\omega}\in B(L^{2}(\Gamma^{(3)}))$ is invertible.

Due to \eqref{5.3}, one has
\begin{align}
\|\mu(\zeta,t,\cdot)-I\|_{L^{2}(\Gamma^{(3)})}=O(t^{-\frac{1}{2}}), \quad t\rightarrow\infty,
\end{align}
where $\mu-I$ can be regarded as the solution of the equation
\begin{align}
(I-C_{\omega})(\mu-I)=C_{\omega}I,
\end{align}
then
\begin{align}
M^{(3)}(\tilde{x},t;\lambda)=I+C(\mu\omega)=
I+\frac{1}{2\pi i}\int_{\Gamma^{(3)}}\mu(\zeta,t;s)\omega(\zeta,t;s)\frac{ds}{s-\lambda},
\end{align}
is the unique solution of the RH problem.

When $\lambda\rightarrow0$, $M^{(3)}(\tilde{x},t;\lambda)$ is expanded as
\begin{align}
M^{(3)}(\tilde{x},t;\lambda)=M^{(3)}_{0}+M^{(3)}_{1}\lambda+O(\lambda^{2}),
\end{align}
where
\begin{align}
&M^{(3)}_{0}=M^{(3)}(\tilde{x},t;0)=I+\frac{1}{2\pi i}\int_{\Gamma^{(3)}}\mu\omega\frac{ds}{s},\\
&M^{(3)}_{1}=\left(M^{(3)}(\tilde{x},t;\lambda)\right)'=\frac{1}{2\pi i}\int_{\Gamma^{(3)}}\mu\omega\frac{ds}{s^{2}}.
\end{align}

For the solution of the WKI equation, one has
\begin{align}
\begin{aligned}
e^{-2d}q(x,t)&=e^{-2d}\tilde{q}(\tilde{x},t)\\
&=\lim_{\lambda\rightarrow0}\frac{\partial}{\partial\tilde{x}}
\frac{\left(\left(\tilde{M}(\tilde{x},t;0)\right)^{-1}\tilde{M}(\tilde{x},t;\lambda)\right)_{12}}{i\lambda}\\
&=-i\lim_{\lambda\rightarrow0}\frac{\partial}{\partial\tilde{x}}
\left(\frac{\left(M^{(3)}(\tilde{x},t;0)\right)^{-1}M^{(3)}(\tilde{x},t;\lambda)-I}{\lambda}\right)_{12}\\
&=-i\lim_{\lambda\rightarrow0}\frac{\partial}{\partial\tilde{x}}
\left(\left(M^{(3)}_{0}\right)^{-1}M^{(3)}_{1}\right)_{12}.
\end{aligned}
\end{align}

On the one hand,
\begin{align}
\begin{aligned}
M^{(3)}_{1}&=\frac{1}{2\pi i}\int_{\Gamma^{(3)}}\mu(s)\omega(s)\frac{ds}{s^{2}}\\
&=\frac{1}{2\pi i}\left(\int_{|\lambda-\lambda_{0}|=\epsilon}+\int_{\Gamma}\right)\frac{\mu\omega}{s^{2}}ds\\
&=\frac{1}{2\pi i}\int_{|\lambda-\lambda_{0}|=\epsilon}\frac{\mu(m^{-1}_{0}-I)}{s^{2}}ds-
\frac{1}{2\pi i}\int_{\Gamma}\frac{\mu\omega}{s^{2}}ds.
\end{aligned}
\end{align}

For above formula, we have
\begin{align}
\begin{aligned}
m^{-1}_{0}&=D(\zeta,t)m^{X}\left(\zeta,\sqrt{-8t}(\lambda-\lambda_{0})\right)D^{-1}(\zeta,t)\\
&=I+\frac{B(\zeta,t)}{\sqrt{-8t}(\lambda-\lambda_{0})}+O(t^{-1}),
\end{aligned}
\end{align}
where
\begin{align}
B(\zeta,t)=\left(\begin{array}{cc}
0 & \gamma^{X}e^{-2(2it\lambda^{2}_{0}-\chi(\lambda_{0}))}(-8t)^{-i\upsilon(\lambda_{0})} \\
-i\beta^{X}e^{2(2it\lambda^{2}_{0}-\chi(\lambda_{0}))}(-8t)^{i\upsilon(\lambda_{0})} & 0
                 \end{array}\right),
\end{align}
then
\begin{align}
\begin{aligned}
&\int_{|\lambda-\lambda_{0}|=\epsilon}\frac{\mu(m^{-1}_{0}-I)}{s^{2}}ds\\
&=\int_{|\lambda-\lambda_{0}|=\epsilon}(m^{-1}_{0}-I)d\lambda+
\int_{|\lambda-\lambda_{0}|=\epsilon}(\mu-I)(m^{-1}_{0}-I)d\lambda\\
&=\frac{2\pi iB(\zeta,t)}{\sqrt{-8t}}+O(t^{-1}).
\end{aligned}
\end{align}

And
\begin{align}
\begin{aligned}
\left|\int_{\Gamma}\frac{\mu\omega}{s^{2}}ds\right|
&=\left|\int_{\Gamma}(\mu-I)\frac{\omega}{s^{2}}ds+\int_{\Gamma}\frac{\omega}{s^{2}}ds\right|\\
&\leq\|\mu-I\|_{L^{2}(\Gamma)}\|\frac{\omega}{s^{2}}\|_{L^{2}(\Gamma)}+\|\frac{\omega}{s^{2}}\|_{L^{1}(\Gamma)}\\
&=O(t^{-1}\ln t),
\end{aligned}
\end{align}
thus
\begin{align}
M^{(3)}_{1}=\frac{B(\zeta,t)}{\sqrt{-8t}}+O(t^{-1}\ln t).
\end{align}

On the other hand,
\begin{align}
M^{(3)}_{0}=I+\frac{1}{2\pi i}\int_{\Gamma^{(3)}}\mu\omega\frac{ds}{s},
\end{align}
then
\begin{align}
M^{(3)}_{0}-I=\frac{1}{2\pi i}\int_{\Gamma^{(3)}}\mu\omega\frac{ds}{s}.
\end{align}

In the same way, we have
\begin{align}
M^{(3)}_{0}=I+\frac{B(\zeta,t)}{\sqrt{-8t}}+O(t^{-1}\ln t).
\end{align}

In summary,
\begin{align}
\begin{aligned}
e^{-2d}q(x,t)&=e^{-2d}\tilde{q}(\tilde{x},t)\\
&=\frac{-i\frac{\partial}{\partial\tilde{x}}\gamma^{X}e^{-2(2it\lambda^{2}_{0}-\chi(\lambda_{0}))}
(-8t)^{-(\frac{1}{2}+i\upsilon(\lambda_{0}))}}{1-\frac{i\gamma^{X}\beta^{X}}{8t}}+O(t^{-1}\ln t).
\end{aligned}
\end{align}

Therefore, the following theorem is summarized.

\begin{thm}
Considering the initial value condition \eqref{1.2}, the solution $q(x,t)$ of the WKI equation is solved as
\begin{align}
q(x,t)=e^{2d}\frac{-i\frac{\partial}{\partial\tilde{x}}\gamma^{X}e^{-2(2it\lambda^{2}_{0}-\chi(\lambda_{0}))}
(-8t)^{-(\frac{1}{2}+i\upsilon(\lambda_{0}))}}{1-\frac{i\gamma^{X}\beta^{X}}{8t}}+O(t^{-1}\ln t),\quad t\rightarrow\infty,
\end{align}
where the error term is uniform with respect to $x$.
\end{thm}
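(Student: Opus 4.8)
\noindent\emph{Proof strategy.}
The plan is to funnel the whole construction into the small-norm analysis of the final Riemann--Hilbert problem on $\Gamma^{(3)}$. First I would check that the reconstruction formula of Section~3 survives the three substitutions $\tilde M\mapsto M^{(1)}=\tilde M\delta^{\sigma_{3}}\mapsto M^{(2)}=M^{(1)}R\mapsto M^{(3)}$: near $\lambda=0$ the matrix $R$ and the local parametrix $m_{0}$ both equal $I$, since the disk $|\lambda-\lambda_{0}|<\epsilon$ with $\epsilon=\lambda_{0}/2$ does not contain $0$, while $\delta$ is scalar, so the $(1,2)$-entries of $(\tilde M(\tilde x,t;0))^{-1}\tilde M(\tilde x,t;\lambda)$ and of $(M^{(3)}(\tilde x,t;0))^{-1}M^{(3)}(\tilde x,t;\lambda)$ agree up to the scalar factors and the constant $e^{d}$, which is exactly what produces the prefactor $e^{2d}$ of the statement. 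Writing $M^{(3)}=M^{(3)}_{0}+M^{(3)}_{1}\lambda+O(\lambda^{2})$ as $\lambda\to0$, the claim then reduces to the identity
\[
e^{-2d}q(x,t)=-i\,\partial_{\tilde x}\big((M^{(3)}_{0})^{-1}M^{(3)}_{1}\big)_{12}
\]
together with an evaluation of the two coefficients $M^{(3)}_{0}$ and $M^{(3)}_{1}$.

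For the latter I would use the Beals--Coifman representation prepared above: since the jump $\omega=J^{(3)}-I$ satisfies $\|\omega\|_{L^{1}\cap L^{2}(\Gamma^{(3)})}=O(t^{-1/2})$ and $\|\omega\|_{L^{\infty}(\Gamma^{(3)})}=O(t^{-1/2}\ln t)$, the operator $1-C_{\omega}$ is invertible for $t$ large, $\|\mu-I\|_{L^{2}(\Gamma^{(3)})}=O(t^{-1/2})$, and
\[
M^{(3)}(\tilde x,t;\lambda)=I+\frac{1}{2\pi i}\int_{\Gamma^{(3)}}\frac{\mu(\zeta,t;s)\,\omega(\zeta,t;s)}{s-\lambda}\,ds,
\]
whence $M^{(3)}_{0}=I+\frac{1}{2\pi i}\int_{\Gamma^{(3)}}\mu\omega\,\frac{ds}{s}$ and $M^{(3)}_{1}=\frac{1}{2\pi i}\int_{\Gamma^{(3)}}\mu\omega\,\frac{ds}{s^{2}}$. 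Next I would split $\Gamma^{(3)}$ into the circle $\{|\lambda-\lambda_{0}|=\epsilon\}$ and the remaining rays and real half-lines $\Gamma$. On $\Gamma$ the jump carries either a rapidly decaying factor $e^{\pm2t\varphi(\lambda)}$ off the real axis or the remainders $r_{j,r}$, which are $O(t^{-3/2})$ in $L^{1}\cap L^{2}\cap L^{\infty}$ by parts (iii)--(iv) of the decomposition Lemma; combined with $\|\mu-I\|_{L^{2}}=O(t^{-1/2})$ this bounds the contribution of $\Gamma$ to both $M^{(3)}_{0}$ and $M^{(3)}_{1}$ by $O(t^{-1}\ln t)$. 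On the circle $\omega=m_{0}^{-1}-I$, and conjugating the $z\to\infty$ expansion of $m^{X}$ by $D(\zeta,t)$ gives
\[
m_{0}^{-1}(\tilde x,t;\lambda)=I+\frac{B(\zeta,t)}{\sqrt{-8t}\,(\lambda-\lambda_{0})}+O(t^{-1}),
\]
with $B(\zeta,t)$ the off-diagonal matrix of the statement; since $0$ lies outside the disk, evaluating the residue at the enclosed simple pole $s=\lambda_{0}$ (and using the $O(t^{-1/2})$ bound for $\mu-I$ on the circle) yields $M^{(3)}_{0}=I+\frac{B(\zeta,t)}{\sqrt{-8t}}+O(t^{-1}\ln t)$ and $M^{(3)}_{1}=\frac{B(\zeta,t)}{\sqrt{-8t}}+O(t^{-1}\ln t)$.

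It then remains to assemble the pieces. Because $B$ is off-diagonal, $B^{2}=B_{12}B_{21}\,I=-i\gamma^{X}(\zeta)\beta^{X}(\zeta)\,I$ once the phases and powers of $(-8t)$ in $B_{12}$ and $B_{21}$ cancel, so that
\[
(M^{(3)}_{0})^{-1}=\Big(I+\frac{B}{\sqrt{-8t}}\Big)^{-1}+O(t^{-1}\ln t)=\frac{1}{1-\frac{i\gamma^{X}\beta^{X}}{8t}}\Big(I-\frac{B}{\sqrt{-8t}}\Big)+O(t^{-1}\ln t),
\]
and, the $(B^{2})_{12}$-entry being zero,
\[
\big((M^{(3)}_{0})^{-1}M^{(3)}_{1}\big)_{12}=\frac{1}{1-\frac{i\gamma^{X}\beta^{X}}{8t}}\cdot\frac{B_{12}}{\sqrt{-8t}}+O(t^{-1}\ln t),\qquad \frac{B_{12}}{\sqrt{-8t}}=\gamma^{X}e^{-2(2it\lambda_{0}^{2}-\chi(\lambda_{0}))}(-8t)^{-(\frac12+i\upsilon(\lambda_{0}))}.
\]
Applying $-i\,\partial_{\tilde x}$ (the $\tilde x$-derivative of the denominator carries an extra factor $t^{-1}$ and only feeds the error), multiplying by $e^{2d}$, and recalling $q(x,t)=\tilde q(\tilde x(x,t),t)$ with $\tilde x(x,t)-x$ bounded (so $\zeta=\tilde x/t$ and $\lambda_{0}$ depend smoothly on $(x,t)$) then gives the stated asymptotics; the error is $O(t^{-1}\ln t)$ uniformly in $x$ because every constant in the decomposition Lemma and in the model-problem estimates is independent of $\zeta$.

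The genuinely hard analytic inputs --- the $C^{11}$ rational approximation of $r(\lambda)$ with $O(t^{-3/2})$ remainders, the exact parabolic-cylinder parametrix, and the small-norm bound on $C_{\omega}$ --- are already in place, so the main obstacle that remains is organisational: verifying that the three successive changes of unknown do not perturb the $\lambda\to0$ reconstruction except through the scalar $\delta$ and the constant $e^{d}$ (i.e.\ that $R\equiv I$ and $m_{0}\equiv I$ near $\lambda=0$), and then tracking the residue step carefully enough that the identity $B^{2}=-i\gamma^{X}\beta^{X}I$ produces the resummed denominator $1-\frac{i\gamma^{X}\beta^{X}}{8t}$ rather than only its first two terms. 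I expect this last phase-and-power bookkeeping in $B_{12}$ and $B_{21}$ to be where a sign or a factor of $(-8t)$ is easiest to lose.
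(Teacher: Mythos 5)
Your proposal follows essentially the same route as the paper: the same reconstruction formula $e^{-2d}q=-i\,\partial_{\tilde x}\bigl((M^{(3)}_{0})^{-1}M^{(3)}_{1}\bigr)_{12}$, the same Beals--Coifman representation $M^{(3)}=I+C(\mu\omega)$ with the $O(t^{-1/2})$ and $O(t^{-1/2}\ln t)$ bounds on $\omega$, the same splitting of $\Gamma^{(3)}$ into the circle $|\lambda-\lambda_{0}|=\epsilon$ (residue of $m_{0}^{-1}-I$ giving $B/\sqrt{-8t}$) and the remainder $\Gamma$ (contributing $O(t^{-1}\ln t)$), and the same final assembly. Your explicit resummation via $B^{2}=-i\gamma^{X}\beta^{X}I$ and your check that $R\equiv I$, $m_{0}\equiv I$ near $\lambda=0$ merely spell out steps the paper leaves implicit, so the argument is correct to the same standard as the paper's own.
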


\section{Conclusions and discussions}

In this work, under zero boundary condition, the Deift-Zhou nonlinear steepest descent method for the oscillatory RH problem is used to study the Wadati-Konno-Ichikawa equation. What is different from the past is that in the spectral analysis, we not only discuss the case of $\lambda\rightarrow\infty$, but also the case of $\lambda\rightarrow0$. Then their properties are used to establish a connection, and the solution $q(x,t)$ of the WKI equation is expressed by the solution of the RH problem. After performing a series of deformations on the oscillatory RH problem, we further get a model RH problem. Finally, the long time asymptotic solution $q(x,t)$ of the WKI equation is derived, which greatly enriches the research on the solution of the WKI equation, and gives us a deeper understanding for the long time asymptotic solution of non-AKNS system.

\section*{Acknowledgements}

This work was supported by  the National Natural Science Foundation of China under Grant No. 11975306, the Natural Science Foundation of Jiangsu Province under Grant No. BK20181351, the Six Talent Peaks Project in Jiangsu Province under Grant No. JY-059, and the Fundamental Research Fund for the Central Universities under the Grant Nos. 2019ZDPY07 and 2019QNA35.


\end{document}